\documentclass{article}
\usepackage[utf8]{inputenc}
\usepackage{amsmath}
\usepackage{amssymb}
\usepackage{amsfonts}
\usepackage{amsthm}
\usepackage{esvect}
\usepackage{cite}
\usepackage{physics}
\usepackage{mathrsfs}
\usepackage{authblk}
\usepackage{geometry}
\usepackage{abstract}
\usepackage{xcolor}
\usepackage{enumerate}
\newcounter{NN}
\newtheorem{prop}[NN]{Proposition}
\newtheorem{ques}[NN]{Question}
\newtheorem{lem}[NN]{Lemma}

\newtheorem{cor}[NN]{Corollary}
\newtheorem{defi}[NN]{Definition}
\newtheorem{exam}[NN]{Example}

\def\is{\iota}
\def\fu{\mathbb{F}} 
\def\F{{\cal F}}
\def\G{{\cal G}}

\def\C{{\cal C}}

\def\cA{{\cal A}} 
\def\cB{{\cal F}} 
 
\def\bbF{\mathbb{F}} 
\def\bbR{\mathbb{R}}
\def\bbC{\mathbb{C}}
\def\ham{{\cal X}} 

\newcommand{\Rk}{\mathop{\rm Rk}\nolimits}
\newcommand{\FI}{\mathop{\rm FI}\nolimits}

\newcommand{\PB}{\left\{\cdot\,,\cdot\right\}}

\newcommand{\Pb}[1]{\left\{\cdot\,,#1\right\}}
\renewcommand{\pb}[1]{\left\{#1\right\}}

\title{\bf Subalgebras of integrals, commutants, and superintegrable deformations of Lotka-Volterra systems}

\author{\bf
Ian Marquette$^{a}$\footnote{i.marquette@uq.edu.au},
Peter H. van der Kamp$^{b}$\footnote{p.vanderkamp@latrobe.edu.au},
G.R.W. Quispel$^{b}$\footnote{r.quispel@latrobe.edu.au}
}

\medskip

\affil{$^{a}$ \quad Department of Mathematical and Physical Sciences, La Trobe University,\\
Bendigo VIC 3552, Australia \\
$^{b}$ \quad Department of Mathematics, La Trobe University,\\
Bundoora VIC 3086, Australia}

\usepackage[colorlinks=true,urlcolor=black]{hyperref}

\begin{document}

\maketitle
\begin{abstract}
We consider the Lie-algebraic notion of commutant in the setting of Poisson algebra. This provides a framework for deforming Hamiltonian differential equations. By taking a subalgebra of the algebra of integrals, and considering the set of functions that Poisson commute with that subalgebra, the
Hamiltonian can be deformed, while retaining integrability. We deform Liouville integrable and superintegrable Lotka-Volterra systems studied in \cite{KKQTV}. We present different explicit constructions considering Abelian and non-Abelian subalgebras of integrals. We obtain superintegrable systems for specific dimensions, and in arbitrary dimension. Polynomial systems are deformed to rational systems, some of which have non-rational integrals. Superintegrability seems to be preserved in this approach.
\end{abstract}

\section{Introduction} 
Liouville integrability and superintegrability have been the object of an extensive literature, which includes various generalisations of Kepler-Coulomb and harmonic oscillator Hamiltonians such as Turbiner-Tremblay-Winternitz and Smorodinsky-Winternitz potentials \cite{win67,eva90,mil13}.
A wide range of superintegrable systems exists beyond integrable Hamiltonians with scalar interaction, including e.g. magnetic fields and spin.

Various properties are known such as exact and quasi-exact solvability in classical and quantum mechanics, the existence of algebra of integrals beyond finite-dimensional Lie algebras and connection with orthogonal polynomials and special functions. In that context, the canonical Poisson bracket is used among generalised variables and generalised momenta. The related integrals are functions of the coordinates, and polynomial in the momenta, whose degree determines the order of the integral. For superintegrable systems with second order integrals the related quadratic (Lie/Poisson) algebra, their Casimir and representations play an important role, in particular to obtain the spectrum of the quantised version \cite{das01, kal02,lia18,mar23}.

This paper aims to build on a series of works on the notion of the commutant of Lie and Poisson-Lie algebras \cite{cam23, cam23b, cam24, cam24b, mar25, mar26a, mar26b}. In those works, since the Poisson-Lie algebras have connection with root systems, grading, and different structural properties of the related Lie algebras, this allows one to obtain an explicit formula for the polynomials in the commutative variables of the Poisson-Lie setting or the non-commutative variable in the related Lie setting. It was demonstrated how the commutant can be constructed explicitly in the case of positive roots subalgebras \cite{cam24} and for Cartan subalgebras of $sl(n)$ and for all simple Lie algebras via indecomposable polynomials \cite{cam23,cam24b}. The latter relates to the well-known Racah algebra, and generalisations, that appear in the mathematical physics literature \cite{GVVZ,CGAV,lat21}. In the case of singular subalgebra embedding that appears in nuclear physics, such as the Elliott chains \cite{cam23b}, the commutant admits more complicated polynomials, which relate to the set of missing labeling operators.
In this context, the Poisson-Lie analog is used primarily as a tool to obtain the leading terms in the Lie setting/quantum setting, which can be obtained by means of the symmetrisation map (and appropriate correction terms from the quantisation \cite{cam23c}). In this paper, the approach is different; we aim to extend the construction of a commutant as a generalisation of Casimir invariants, and use it to deform Liouville and superintegrable systems of nonlinear differential equations.

A Liouville integrable system consists of a Hamiltonian with $m$ degrees of freedom (on a $2m$-dimensional phase space) requiring $m$ mutually commuting integrals of motion (including the Hamiltonian). We note that different definitions of classical and quantum superintegrability are in use. In one setting, a superintegrable system is a Liouville integrable system with $0<k<m$ additional integrals (Poisson-commuting with the Hamiltonian). In another (non-Hamiltonian) setting, a system is superintegrable if there exist $n-1$ integrals for an $n$-dimensional system. In this paper, we consider  $n$-dimensional Hamiltonian systems. Such a system is \emph{superintegrable of rank $r$} if there are $s$ functionally independent integrals, $r$ of which commute with all of them, and $r + s = n$.

Lotka-Volterra (LV) systems are among the best known dynamical systems for their application in the context of biology and description of nonlinear phenomena. It was only discovered in recent years that large families of $n$-dimensional Liouville integrable and superintegrable LV-systems exist. Both notions of superintegrability apply. There are Liouville integrable systems which are also superintegrable \cite{Bog,KKQTV,CHK,KMQ,QPW} and there are non-Hamiltonian superintegrable (or nonholonomically integrable) LV-families \cite{treeL,KMMQ,treeP,Hyper,QPW}. In this paper, we reexamine the Liouville superintegrable LV-systems, with $n\in\mathbb{N}$,
\begin{equation} \label{silv}
\dot{x}_i=x_i\left(-\sum_{j=1}^{i-1} x_j + \sum_{j=i+1}^{n} x_j  \right),\quad i=1,\ldots,n,
\end{equation}
whose Poisson algebra of integrals of motion was obtained in 2014 \cite{KKQTV}. We demonstrate that the Jacobi identity is satisfied not only by the quadratic Poisson algebra of the initial coordinates, but also by the bracket induced by the algebra of integrals, taking the integrals as new coordinates. In the even-dimensional case, we obtain another quadratic Poisson algebra and in the odd-dimensional case we obtain a Poisson algebra with rational expressions. In other contexts of superintegrable systems, to obtain a polynomial algebra one needs additional relations, called closure relations. This is the case for the Racah algebras and related embeddings \cite{lat21}. 

The paper is organised as follows. In Section \ref{sec2}, we provide the definition of a Liouville integrable system (with Casimirs). We prove a basic lemma that tells us that the algebra of integrals defines a Poisson bracket on the space of integrals. We define the notion of commutant relative to a Poisson subalgebra. And we provide a method to prove the completeness of a set of integrals. In Section \ref{sec3}, we present the integrals for LV-system \eqref{silv} and the Poisson algebra they satisfy. In Section \ref{sec4}, we outline how to deform a Hamiltonian system using the generators of a subalgebra of integrals. The deformed system admits the commutants as integrals. As an example, we deform the LV-system \eqref{silv} with a non-Abelian subalgebra of integrals, and obtain superintegrable systems (of different ranks) in both the new variables and in the initial ones.
In Section \ref{sec5}, we consider a discrete family of Abelian subalgebras, in arbitrary dimensions ($n$ odd and $n$ even). In Section \ref{sec6}, we consider 1-dimensional algebras of integrals, where the generator is a linear combination. Here we obtain superintegrable systems with nonalgebraic integrals. In Section \ref{sec7}, we show how to iterate the construction, that is, how to deform deformed systems.

We explicitly provide commutants, and prove completeness, for a variety of subalgebras of integrals, the Poisson algebras they satisfy, as well as the deformed systems. The results obtained point out the wide applicability of our method. 

\section{Poisson algebras, commutants and superintegrable systems} \label{sec2}
In this section we recall the basic definitions of Liouville and superintegrable systems, which we adapt to the case of rational functions. Since the phenomena which we wish to describe in this paper are algebraic and very generally valid, we can take as our base field any field $\bbF$ of characteristic zero, keeping in mind the examples of $\bbR$ and $\bbC$; also, we use Poisson brackets, which are more general and more algebraic than symplectic structures.

\smallskip

Our basic algebra will be $\cA_n=\bbF(x_1,\ldots,x_n)$, the field of fractions over $\bbF$ in $n$ (commuting) indeterminates. Geometrically, one may think of $\cA_n$ as the field of rational functions on $\bbF^n$ with values in $\bbF$. Such a function can be evaluated at a generic point of $\bbF^n$ and is smooth when $\bbF=\bbR$, and is holomorphic when $\bbF=\bbC$, on an open dense subset of $\bbF^n$. A \emph{Poisson bracket} on $\cA$ is a Lie bracket $\{\cdot,\cdot\}:\cA\times\cA\rightarrow\cA$ which satisfies the Leibniz rule in each argument, i.e., is a derivation in each argument. It follows that the Poisson bracket of $f,g\in\cA_n$ can be computed from the Poisson brackets $\pb{x_i,x_j}$, with $1\leqslant i<j\leqslant n$,
\begin{equation}\label{eq:pb}
  \pb{f,g}=\sum_{i,j=1}^n\pb{x_i,x_j}\frac{\partial f}{\partial x_i}\frac{\partial g}{\partial x_j}\;.
\end{equation}
It makes $\cA_n$ into a \emph{Poisson algebra} $(\cA_n,\PB)$. It follows from \eqref{eq:pb}, which is valid for any biderivation on $\cA_n$, that $\PB$ satisfies the Jacobi identity for any triplet of elements of $\cA_n$ as soon as the Jacobi identity is satisfied for all triplets $(x_i,x_j,x_k)$ with $1\leqslant i<j<k\leqslant n$. Thus the matrix $x_{i,j}=\pb{x_i,x_j}$ contains all the information about the Poisson bracket; it is called the \emph{Poisson matrix} of $(\cA_n,\PB)$ and its rank is called the \emph{Poisson rank} of $(\cA_n,\PB)$; the Poisson rank is an even number, smaller than or equal to $n$, denoted $\Rk\PB$. It can be computed by evaluating the Poisson matrix at a generic point of $\bbF^n$. 

\smallskip

Poisson subalgebras will play an important role in this paper. Given a Poisson algebra $\cA$, a subset $\cB\subset\cA$ is a \emph{Poisson subalgebra} when it is both a subalgebra (that is, $\cB\cB\subset\cB$) and a Lie subalgebra (that is, $\pb{\cB,\cB}\subset\cB$). One may then restrict the product and Poisson bracket of $\cA$ to $\cB$, which makes $\cB$ into a Poisson algebra. Suppose now that for a finite subset $\{f_1,\dots,f_k\}$ of $\cA_n$, for any $i<j$, $\pb{f_i,f_j}$ can be written as a rational function of $f_1,\dots,f_k$. Then the subalgebra $\F$ generated by $f_1,\dots,f_k$ is a Poisson subalgebra of $\cA_n$. A particular case is when the functions are functionally independent. Then the subalgebra generated by $f_1,\dots,f_k$ is the field of fractions $\F=\bbF(f_1,\dots,f_k)$. Let us introduce the matrix $f_{i,j}(f)$ such that
\begin{equation} \label{fijx}
\{f_i(x),f_j(x)\}=f_{i,j}(f(x)).
\end{equation}
\begin{lem} \label{LPB}
The matrix $f_{i,j}$ defines a bracket on $\F$, $\{\cdot,\cdot\}_\F:\F\times\F\rightarrow\F$,
\begin{equation} \label{pbf}
\{f_i,f_j\}_\F=f_{i,j}.
\end{equation}
\end{lem}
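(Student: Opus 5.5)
We read the statement as saying that, when $f_1,\dots,f_k$ are functionally independent, so that $\F=\bbF(f_1,\dots,f_k)$ may be identified with $\cA_k=\bbF(y_1,\dots,y_k)$ via $y_i\mapsto f_i$, the matrix $f_{i,j}(y)$ is the Poisson matrix of a Poisson bracket on $\cA_k$. Concretely, we define
\[
\{F,G\}_\F=\sum_{i,j=1}^k f_{i,j}(y)\,\frac{\partial F}{\partial y_i}\frac{\partial G}{\partial y_j},\qquad F,G\in\bbF(y_1,\dots,y_k).
\]
By the discussion following \eqref{eq:pb}, this is automatically a biderivation, and it satisfies \eqref{pbf}. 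So only two things must be checked: skew-symmetry, and the Jacobi identity on the generators $y_i,y_j,y_l$.

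The key tool is the pull-back map $\phi:\bbF(y_1,\dots,y_k)\to\cA_n$, $F\mapsto F(f_1(x),\dots,f_k(x))$. By functional independence, $\phi$ is an injective field homomorphism whose image is $\F$. We show that $\phi$ intertwines the brackets:
\[
\phi\{F,G\}_\F=\{\phi F,\phi G\}.
\]
To see this, expand $\{F(f),G(f)\}$ with the chain rule, using that $\PB$ is a biderivation on $\cA_n$. This gives $\sum_{i,j}\partial_iF(f)\,\partial_jG(f)\,\{f_i,f_j\}$. Substituting \eqref{fijx} turns this into $\sum_{i,j}\partial_iF(f)\,\partial_jG(f)\,f_{i,j}(f)$, which is $\phi$ applied to $\{F,G\}_\F$. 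One point needs care: the chain rule for rational functions composed with rational functions. It holds formally, because both sides are derivations in $F$ that agree on the $y_i$, and derivations of a field extend uniquely to fractions.

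With the intertwining relation, both properties transfer from $\cA_n$. For skew-symmetry, $\phi(f_{i,j}+f_{j,i})=\{f_i,f_j\}+\{f_j,f_i\}=0$, so injectivity of $\phi$ gives $f_{i,j}=-f_{j,i}$ as rational functions. For Jacobi, applying $\phi$ to the Jacobiator of $\{\cdot,\cdot\}_\F$ on $y_i,y_j,y_l$ yields the Jacobiator of $\PB$ on $f_i,f_j,f_l$, which vanishes. By injectivity the former vanishes too. Hence $\{\cdot,\cdot\}_\F$ is a Poisson bracket, and $\phi$ is a Poisson isomorphism onto $\F$.

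The main obstacle is the injectivity of $\phi$, i.e.\ that $f_{i,j}$ is uniquely determined as a rational function. Without it, an identity holding on the image $f(\bbF^n)$ need not hold identically. We would argue that functional independence means the Jacobian of $f$ has rank $k$ at a generic point. Then the image of $f$ contains a nonempty open (Zariski-dense) set, so a rational function vanishing on it is zero. Over a general field of characteristic zero, we would instead use the Jacobian criterion for algebraic independence, which gives the same conclusion.
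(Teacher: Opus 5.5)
Your proof is correct and follows the same route as the paper: the paper likewise argues that the Jacobiator $R_{i,j,k}$, evaluated at $f(x)$, vanishes identically in $\cA_n$, so functional independence forces $R_{i,j,k}=0$. You additionally make explicit the chain-rule intertwining behind that step, as well as skew-symmetry and the uniqueness of $f_{i,j}$, all of which the paper leaves tacit; and the injectivity you flag as the main obstacle is in fact immediate if functional independence is read as algebraic independence (condition (1) of Definition~\ref{def:non-com}), since $P(f)/Q(f)=0$ gives $P(f)=0$ and hence $P=0$.
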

\begin{proof}
Suppose the Jacobi identity is not satisfied. Then there exist $i<j<k$ such that
\[
R_{i,j,k}(f)=\{f_i,\{f_j,f_k\}_\F\}_\F+\{f_j,\{f_k,f_i\}_\F\}_\F+\{f_k,\{f_i,f_j\}_\F\}_\F\neq 0.
\]
However, as a function of $x_1,\ldots,x_n$ we have $R_{i,j,k}(f(x))=0$. Therefore, $R_{i,j,k}(f)=0$ is a non-trivial relation between $f_1,\ldots, f_m$, in contradiction with $f_1,\ldots, f_m$ being functionally independent.
\end{proof}

\begin{exam} \label{ex1}
   Consider on $\cA_5:=\bbF(x_1,\ldots,x_5)$ the quadratic Poisson bracket defined by
\begin{equation}
\{x_i,x_j\}= x_i x_j,\quad i<j.
\end{equation} 
And consider the functions
\[
H=x_1+x_2+x_3+x_4+x_5,\quad F=\frac{\left(x_{1}+x_{2}+x_{3}\right) x_{5}}{x_{4}},\quad
G=\frac{\left(x_{5}+x_{4}+x_{3}\right) x_{1}}{x_{2}},\quad
C=\frac{x_{1} x_{3} x_{5}}{x_{2} x_{4}}.
\]
One can verify that the only nonvanishing bracket between these functions is
\[
\{F,G\}=-\frac{x_{5} x_{1} \left(x_{4}+x_{5}\right) \left(x_{1}+x_{2}\right)}{x_{2} x_{4}}.
\]
On the space $\F=\fu(H,F,G,C)$ we have
\[
\{F,G\}_\F=CH-FG,
\]
which, together with $\{H,\dot\}=\{C,\cdot\}=0$, defines a bracket on $\F$.
\end{exam}

Let us recall how a Poisson bracket on $\cA$ is used to construct Hamiltonian systems.  Fixing one of the arguments of the Poisson bracket to $h\in\cA$, we get a derivation $\ham_h:=\Pb h$ which is called the \emph{Hamiltonian derivation} associated to $h$; in geometrically terms, $\ham_h$ is a (rational) vector field on $\fu^n$.
When $\ham_h=0$, one says that $h$ is a (rational) \emph{Casimir}. When $f,h\in\cA$ have zero Poisson bracket, $\pb{f,h}=0$, they are said to be \emph{in involution}. Then $\ham_hf=0$, that is, $f$ is a first integral of $\ham_h$.  Given any subset $\cB\subset\cA$, the \emph{centraliser} or \emph{commutant} of $\cB$ in $\cA$ is defined by
\begin{equation}
C_{\cA}(\cB):=\left\{f\in\cA\mid\pb{f,g}=0\hbox{ for all } g\in\cB\right\}\;.
\end{equation}
Taking $\cB:=\{h\}$ it is clear that $C_{\cA}(h)$ consists of the first integrals of $\ham_h$. In what follows we will therefore denote $C_{\cA}(h)$ also by $\FI(h)$. For any $h\in\cA$, $\FI(h)$ is a Poisson subalgebra of $\cA$.
We now define the notion of Liouville and superintegrability, cf. \cite[Definition 3.1]{CEP} for the analogues definition in the smooth case (and note that there the systems are called non-commutative integrable).
  
\begin{defi}\label{def:non-com}
Suppose that $\cA_n=\bbF(x_1,\dots,x_n)$ is equipped with a Poisson bracket $\PB$. An $s$-tuplet $(f_1=h,f_2,\dots,f_s)$ of elements of $\cA_n$ is said to be a \emph{superintegrable system of rank $r$} on $(\cA_n,\PB)$ if
\begin{enumerate}
  \item[(1)] $f_1,\dots,f_s$ are algebraically independent;
  \item[(2)] The functions $f_1,\dots,f_r$ are in involution with the functions $f_1,\dots,f_s$;
  \item[(3)] $r+s =\dim M$;
  \item[(4)] The Hamiltonian derivations $\ham_{f_1},\dots,\ham_{f_r}$ are linearly independent at some point of $\bbF^n$.
\end{enumerate}
A single element $h$ of $\cA_n$ is said to be \emph{superintegrable of rank $r$} if there exists a superintegrable system $(f_1,\dots,f_s)$ of rank $r$ on $\cA_n$, with $f_1=h$.
\end{defi}
Notice that $r\leqslant\frac12\Rk\PB$, as a consequence of (4), which prevents any of the $f_1,\ldots,f_r$ to be a Casimir. Superintegrable systems of rank $r=\frac12\Rk\PB$ are more commonly called \emph{Liouville integrable systems}, reserving the terminology superintegrable for the case $r<\frac12\Rk\PB$. On the other extreme, when $r=1$ one speaks of \emph{maximal superintegrability}\footnote{Many authors use the terminology \emph{non-commutative integrable} for what is called here superintegrable and use the terminology \emph{superintegrable} for what we call here maximally superintegrable}. From the geometrical point of view, the $s$ functions $f_1,\dots,f_s$ define a rational map whose generic fibers are $r$-dimensional; when $\bbF=\bbR$ the compact connected components of these fibers are tori, known as the \emph{Liouville tori}. An action-angle theorem for superintegrable systems of rank $r$ was proven in \cite[Theorem 3.6]{CEP}.

\begin{exam}
Continuing Example 1, the Hamiltonian $H$ gives rise to the system of ordinary differential equations (ODE), $\dot{x}=\{x,H\}$, or
\begin{equation} \label{n=5} 
\begin{split}
    \dot{x}_1&=x_1(x_2+x_3+x_4+x_5)\\
    \dot{x}_2&=x_1(-x_1+x_3+x_4+x_5)\\
    \dot{x}_3&=x_1(-x_1-x_2+x_4+x_5)\\
    \dot{x}_4&=x_1(-x_1-x_2-x_3+x_5)\\
    \dot{x}_5&=x_1(-x_1-x_2-x_3-x_4).
\end{split}
\end{equation}
The functions $F,G$ are integrals, and the function $C$ is a Casimir. The system is Liouville integrable, or superintegrable of rank 2, in two different ways, with tuple of integrals
$(H,F,C)$ and with $(H,G,C)$. It is also superintegrable of rank 1 with integrals $(H,F,G,C)$.

On the space $\F=\fu(H,F,G,C)$, the tuples $(F,H,C)$ and $(G,H,C)$ are superintegrable of rank 1. For example, taking $F$ as the Hamiltonian we obtain the ODE
\begin{equation}
\begin{split}
\dot{H}&=0\\
\dot{F}&=0\\
\dot{G}&=FG-CH\\
\dot{C}&=0.
\end{split}
\end{equation}

Technically, one can view the tuple $(H,F,G,C)$ as superintegrable of rank 0, as $H$ has trivial flow. 
\end{exam}

It may be very difficult to give a complete description of the integrability for a given Hamiltonian $h$, even knowing that a Hamiltonian is integrable or not is in general a difficult task. It relies on being able to solve a system of partial differential equations (PDE). As seen in the examples, a given $h\in\cA$ may be superintegrable of different ranks, as one may sometimes choose among the first integrals different sets, where for example one set makes it Liouville integrable and another set makes it maximally superintegrable. In general, superintegrability of some rank $r$ does not imply superintegrability of some lower or higher rank and it is in general very difficult to determine for which ranks from $1$ to $\frac12\Rk\PB$ it is superintegrable, if any. All one can do is exhibit first integrals and check if they can be arranged so as to prove superintegrability of some rank.

\smallskip
We mention two further questions, which were raised by Pol Vanhaecke \cite{Pol} and are natural questions to ask for every system one looks at.
\begin{ques} \label{q1}
For a given set of rational first integrals $h=f_1,f_2,\ldots,f_k\in\cA$ of $h$, is every rational first integral of $h$ a rational function of the given ones? Said differently, is it true that $\FI(h)=\bbF(f_1,\ldots,f_k)$?
\end{ques}
In the above example, since $H\in\cA_5$ is superintegrable of rank 1, there is a rational relation between any first integral $K$ and the given first integrals, but that is not sufficient as the relation may not be linearly solvable for $K$.
\begin{ques} \label{q2}
For a given set of rational functions $h=f_1,f_2,\ldots,f_s\in\cA$ , with the first $r$ of them pairwise commuting, is every rational first integral of $h$ in involution with $f_1,f_2,\ldots,f_r$, a rational function of the given functions? Said differently, is it true that $C_{\cA}(f_1,\ldots,f_r)=\bbF(f_1,\ldots,f_s)$?
\end{ques}
An answer to Question 5 in the case of polynomial Liouville integrable systems was given  in \cite[Chapter II, Prop. 3.7]{Van}. We provide an approach to answer such questions in the rational case. For example, to answer Question \ref{q1}, one can choose a subset of variables, $S\subset V$, such that the transformation from $V$ to $I \cup S$ is a birational map, and then prove that the condition for a function to be an integral is that the function does not depend on $S$.

\begin{exam} \label{exm}
On $\cA_5$, we consider the functions $z=[H,F,C,G,x_5]$. If the map $x_i\rightarrow z_i$, $i=1,\ldots,5$ is birational, i.e. we can express the $x_i$ as rational functions of $z_1,z_2,z_3,z_4,z_5$, then each rational function $K(x)$ is a rational function of the integrals, and $x_5$. The map is birational indeed, we have
\begin{align*}
x_{1} &= 
\frac{\left(z_{3}-z_{5}\right) \left(z_{1} z_{3}-z_{2} z_{4}\right)}{z_{1} z_{3}-z_{1} z_{5}-z_{2} z_{4}-z_{2} z_{5}-z_{3} z_{5}-z_{4} z_{5}}
,\\ x_{2} &= 
-\frac{\left(z_{3}-z_{5}\right) \left(z_{3}+z_{2}\right) \left(z_{1} z_{3}-z_{2} z_{4}\right) z_{5}}{\left(z_{1} z_{3}-z_{1} z_{5}-z_{2} z_{4}-z_{4} z_{5}\right) \left(z_{1} z_{3}-z_{1} z_{5}-z_{2} z_{4}-z_{2} z_{5}-z_{3} z_{5}-z_{4} z_{5}\right)}
,\\ x_{3} &= 
-\frac{\left(z_{3}+z_{2}\right) \left(z_{3}-z_{5}\right) z_{1} z_{5}}{\left(z_{1} z_{3}-z_{1} z_{5}-z_{2} z_{4}-z_{4} z_{5}\right) \left(z_{2}+z_{5}\right)}
,\\ x_{4} &= \frac{z_{5} \left(z_{3}-z_{5}\right)}{z_{2}+z_{5}},\\ x_{5} &= 
z_{5}
\end{align*}
Moreover, the condition for $K(x)$ to be an integral, $\{H, K(x)\} =0$, yields the following PDE
\begin{align*}
 -x_{1} \left(x_{2}+x_{3}+x_{4}+x_{5}\right) K_{x_{1}}+x_{2} \left(x_{1}-x_{3}-x_{4}-x_{5}\right) K_{x_{2}}+&x_{3} \left(x_{1}+x_{2}-x_{4}-x_{5}\right) K_{x_{3}}\\
 +x_{4} \left(x_{1}+x_{2}+x_{3}-x_{5}\right) K_{x_{4}}+&x_{5} \left(x_{1}+x_{2}+x_{3}+x_{4}\right) K_{x_{5}}
=0,   
\end{align*}
and this PDE transforms into
\[
\left(z_{3}-z_{5}\right) z_{5} K_{z_{5}} = 0.
\]
Hence, rational integrals depend rationally on the known integrals only.
\end{exam}

In the next section, we review the first integrals of a $n$-dimensional Lotka-Volterra system presented in \cite{KKQTV}. The equation \eqref{n=5} is the special case taking $n=5$. In arbitrary dimensions, we have not been able to prove the birationality of the change of variables, but we conjecture this to be the case, and we provide explicit formulas in Appendix \ref{secA}.

\section{A family of superintegrable Lotka-Volterra systems} \label{sec3}

In this section, we review from \cite{KKQTV} a family of superintegrable systems on which we will illustrate all the phenomena studied in this paper. 

\smallskip

Consider on $\cA_n:=\bbF(x_1,\ldots,x_n)$ the quadratic Poisson bracket defined by
\begin{equation} \label{PB}
\{x_i,x_j\}= x_i x_j,\quad i<j\;.
\end{equation}
It makes $(\cA_n,\PB)$ into a Poisson algebra. When $n$ is even, $\Rk(\cA_n,\PB)=n$, otherwise $\Rk(\cA_n,\PB)=n-1$. Taking
\begin{equation} \label{Ham}
  H:=\sum_{i=1}^{n} x_i\;,
\end{equation}
the Hamiltonian derivation $\ham_H=\dot{x}=\{x,H\}$ is completely described by
\begin{equation} \label{LV}
\dot{x}_i=x_i\left(-\sum_{j=1}^{i-1} x_j + \sum_{j=i+1}^{n} x_j  \right),\quad i=1,\ldots,n\;.
\end{equation}
This particular Lotka-Volterra system appeared as a subsystem of the quadratic vector fields of the $\Upsilon$-systems that were associated with multi-sums of products in \cite{KKQTV}; it was shown in \cite{DEKV,QPW} that it is a reduction of the Bogoyavlenskij system \cite{Bog} %
\begin{equation} \label{bog}
\dot{x}_i=x_i\,\big(\sum_{j=1}^{n-1} x_{i+j} - \sum_{j=1}^{n-1} x_{i-j}\big)\;,
\end{equation}
which generalises the well-known Volterra lattice. In \cite{KKQTV} it was shown that the Hamiltonian in \eqref{Ham} is both Liouville integrable and superintegrable, as we now recall. A first set of $m:=\lfloor(n+1)/2\rfloor$ first integrals of \eqref{LV} is given by   
\[
F_k:= \sum_{\ell=1}^{2k-1} x_\ell \prod_{s=k}^{(n-1)/2} \frac{x_{2s+1}}{x_{2s}},\quad k=1,\ldots,m
\]
for $n=2m-1$ odd, and 
\[
F_k:= \sum_{\ell=1}^{2k} x_\ell \prod_{s=k}^{n/2-1} \frac{x_{2s+2}}{x_{2s+1}},\quad k=1,\ldots,m
\]
for $n=2m$ even. Another set of integrals is obtained using the involutive algebra homomorphism $\imath:\cA_n\to\cA_n$, defined by
\begin{equation} \label{istar}
\imath(x_i):=x_{n+1-i},\qquad i=1,\dots,n.
\end{equation}
This other set is given by $G_k:=\imath(F_k)$, where $k=1,\dots,m$. We have $F_m=G_m=H$, and, when $n$ is odd, $F_1=G_1=C$, where $C$ is the following Casimir of the Poisson bracket~\eqref{PB}:
\[
C:=\frac{x_1x_3\cdots x_{n-2}x_{n}}{x_2x_4\cdots x_{n-1}}\;,
\] 
and, besides the indicated equalities, all these first integrals are algebraically independent and there are $n-1$ of them, \cite[Appendix A]{KKQTV}. The integrability of $H$ that was proven in \cite{KKQTV} can be summarised as follows.
\begin{prop} \label{psum}
\begin{itemize}
\item $H$ is Liouville integrable with involutive first integrals $F_1,F_2,\dots,F_m=H$.
\item $H$ is Liouville integrable with involutive first integrals $G_1,G_2,\dots,G_m=H$.
\item When $n=2m$ is even, $H$ is maximally superintegrable with first integrals $F_1,G_1,F_2,G_2,\dots,F_{m-1},G_{m-1},$ $F_m=G_m=H$.
\item When $n=2m-1$ is odd, $H$ is maximally superintegrable with first integrals $F_1=G_1,F_2,G_2,\dots,F_{m-1},$ $G_{m-1},F_m=G_m=H$.
\end{itemize}
\end{prop}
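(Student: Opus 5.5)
The plan is to verify the four conditions of Definition~\ref{def:non-com} in each case, relying on the algebraic independence result of \cite[Appendix A]{KKQTV} quoted above. The real work is the involutivity computations. Because $\imath$ reverses the order of the indices, it is an anti-Poisson map: $\{\imath f,\imath g\}=-\imath\{f,g\}$. So any statement about the $G_k$ follows from the corresponding one about the $F_k$, and it suffices to treat the $F_k$, plus the fact that each $F_k$ and $G_k$ Poisson commutes with $H$.

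\textbf{Step 1: $\{x_i,F_k\}$ and the fact that $F_k$ is an integral.} Write $F_k=S_k P_k$, where $S_k=x_1+\dots+x_{2k-1}$ (or up to $x_{2k}$ when $n$ is even) and $P_k$ is the monomial tail. For $\{x_i,\cdot\}$ acting on a monomial $\prod x_j^{a_j}$ we have
\[
\{x_i,\textstyle\prod x_j^{a_j}\}=x_i\prod x_j^{a_j}\big(\sum_{j>i}a_j x_j-\sum_{j<i}a_j x_j\big),
\]
because $\{x_i,x_j\}=x_ix_j$ for $i<j$. The tail $P_k$ alternates exponents $+1,-1$ in consecutive pairs starting after the block $S_k$.
\begin{itemize}
\item For $i$ inside the block, $\{x_i,P_k\}=0$, since each pair contributes $x_{2s+1}-x_{2s}$ with the same sign and these telescope only with $S$. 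I will check this carefully.
\item For $i$ outside the block, $\{x_i,S_k\}=-x_iS_k$ and $\{x_i,P_k\}$ gives a compensating term.
\end{itemize}
From this, $\{H,F_k\}=\sum_i\{x_i,F_k\}=0$ should follow by telescoping. The same computation shows that $C$ is a Casimir: the alternating exponents make $\sum_{j>i}a_jx_j-\sum_{j<i}a_jx_j$ combine with $a_i$ so that everything cancels.

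\textbf{Step 2: involutivity $\{F_k,F_l\}=0$ for $k<l$.} The block of $F_k$ is contained in the block of $F_l$, and the tails are nested: $P_k=P_l\cdot(\text{pairs between the blocks})$. Using Leibniz, I expand
\[
\{S_kP_k,S_lP_l\}
\]
into four terms and use Step 1 (elements of a block commute with their own tail, and $x_i$ outside a block acts on it by $-x_i S$ plus a correction) to show the terms cancel. This is the main obstacle: the bookkeeping of signs and telescoping. I would first verify it for $n=4,5$, which Example~\ref{ex1} partly covers, and then organise the general case by writing everything in terms of the partial sums $\sum_{j\le p}x_j$.

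\textbf{Step 3: counting and assembling the conclusions.} In the Liouville case there are $m$ commuting functions $F_1,\dots,F_m$.
\begin{itemize}
\item For $n=2m$: $r=s=m$ and $r+s=n$.
\item For $n=2m-1$: $F_1=C$ is a Casimir, so the non-trivial commuting set has size $m-1$; with $s=m$ we get $r+s=2m-1=n$. Here condition (4) applies to $F_2,\dots,F_m$, and $C$ serves as the extra function.
\end{itemize}
Linear independence of the Hamiltonian derivations follows from algebraic independence of the $F_k$ together with the rank of the Poisson matrix. Concretely, at a generic point the differentials $dF_k$ are independent and the Poisson tensor is nondegenerate on them modulo the Casimir; alternatively, I would exhibit a generic point and check this numerically, or via a triangular structure in $x_{n}, x_{n-1},\dots$.

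For maximal superintegrability, take $s=n-1$ functions: all of the $F_k$ and $G_k$ with the coincidences removed, which is $2m-1=n-1$ for $n$ even and $2m-2=n-1$ for $n$ odd. Their independence is \cite[Appendix A]{KKQTV}. Each commutes with $H$ by Step 1 and its image under $\imath$. With $r=1$ we get $r+s=n$, and $\ham_H\neq0$. This completes the proof.
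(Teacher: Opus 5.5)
The paper does not prove this proposition itself. It cites \cite{KKQTV} both for the brackets \eqref{PBC} and for algebraic independence. Your scaffolding is sound: the anti-Poisson involution $\imath$ transfers everything to the $G_k$, the counts give $r+s=n$, and condition (4) follows from the generic rank, with $dC$ spanning the kernel when $n$ is odd. However, the part you set out to prove yourself, the brackets, has a real gap.

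\textbf{Step 1.} The monomial formula is wrong. Since $\{x_i,x_j^{a}\}=a\,x_ix_j^{a}$ for $i<j$, one has $\{x_i,\prod_j x_j^{a_j}\}=x_i\prod_j x_j^{a_j}\bigl(\sum_{j>i}a_j-\sum_{j<i}a_j\bigr)$. The sums involve only exponents, with no factors $x_j$. With your version the tail would give $x_iP_k\sum_s(x_{2s+1}-x_{2s})\neq 0$, so your stated reason for $\{x_i,P_k\}=0$ fails. The correct reason is that every tail index exceeds $i$ and the tail exponents sum to zero. The corrected formula also gives $\{x_i,P_k\}=x_iP_k$ for every tail index $i$. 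Hence $\{x_i,F_k\}=0$ for all $i$ outside the block, and $\{x_i,F_k\}=P_k\{x_i,S_k\}$ inside it. This yields $\{H,F_k\}=P_k\{S_k,S_k\}=0$ directly, with no telescoping. Likewise $C$ is a Casimir because its two exponent sums coincide: both are $0$ for odd $i$ and both are $1$ for even $i$.

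\textbf{Step 2.} This is the heart of the first two bullets, and you only announce it. A four-term Leibniz expansion plus checks for $n=4,5$ is not a proof. The missing idea is truncation. Take $k<l$ with $n$ odd. Then $\{F_k,F_l\}=\sum_i\partial_{x_i}F_k\,\{x_i,F_l\}$ receives contributions only from $i\le 2l-1$, where $\{x_i,F_l\}=P_l\{x_i,S_l\}$. Write $F_k=(S_kQ)P_l$ with $Q=\prod_{s=k}^{l-1}x_{2s+1}/x_{2s}$, so that $S_kQ$ depends only on $x_1,\dots,x_{2l-1}$. This gives $\{F_k,F_l\}=P_l^2\{S_kQ,S_l\}$. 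Now $S_kQ$ and $S_l$ are exactly $F_k$ and $H$ of the same Lotka--Volterra system in the first $2l-1$ variables, whose bracket is the restriction of \eqref{PB}. So $\{S_kQ,S_l\}=0$ by the corrected Step 1 in dimension $2l-1$. The even case is identical with $2l$ in place of $2l-1$.

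With these two repairs, and independence taken from \cite[Appendix A]{KKQTV} as both you and the paper do, your Step 3 goes through.
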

The given first integrals generate a Poisson subalgebra of $(\cA_n,\PB)$ which we denote by $\cB$, so that when $n=2m$ is even,
\begin{equation} \label{Fe}
 \cB=\bbF(H,F_1,G_1,\dots,F_{m-1},G_{m-1}),
\end{equation}
while when $n=2m-1$ is odd, \begin{equation} \label{Fo}
\cB=\bbF(H,C,F_2,G_2,\dots,F_{m-1},G_{m-1}).
\end{equation}
It is clear that $\cB\subset\FI(H)$, possibly with equality, cf. Question \ref{q1}.  

\smallskip
In order to describe the Poisson structure on $\cB$, inherited from the Poisson bracket on $\cA_n$, let $1\leqslant i,j<m$ and denote
\begin{equation} \label{kij}
\kappa:=\kappa(i,j):=i+j-m-1\;.
\end{equation}
It was shown that
\begin{equation} \label{PBC}
\{H,F_i\}=0,\quad \{H,G_i\}=0,\quad
\{F_i,F_j\}=0,\quad \{G_i,G_j\}=0,
\end{equation}
and, for $n$ even, 
\begin{equation} \label{PBE}
\{F_i,G_j\}=
\begin{cases} -F_i G_j & \kappa < 0 \\
- F_{m-j} G_{m-i} &  \kappa \geqslant 0,
\end{cases}
\end{equation}
and, for $n$ odd,
\begin{equation} \label{PBO}
\{F_i,G_j\}=
\begin{cases} 0 & \kappa < 0 \\
-\dfrac{\prod_{l=0}^{\kappa} (F_{i-\kappa +l} G_{j-l} -CH )}{\prod_{l=1}^{\kappa} (F_{i-\kappa +l-1} G_{j-l} + CH )} &  \kappa \geqslant 0.  \end{cases}
\end{equation}


Since all elements of $\cB_n$ are first integrals of $H$, $\pb{\cB_n,H}=0$, so that $H$ is a Casimir of $(\cB_n,\PB)$. It follows that $\Rk(\cB_n,\PB)=\Rk(\cA_n,\PB)-2$. Apart from Prop. \ref{psum}, the Poisson structure on $\cB$ also implies that the odd $n$-dimensional system is superintegrable of rank $R$, for $R=1,2,\ldots,(n-1)/2$. 
\begin{prop}
For odd $n=2m-1$, we have with $R=1,\ldots,m-1$, the tuplets
\[
(F_m,F_2,\ldots,F_{m-1},G_2,\ldots,G_{m-R},F_1),\quad
(G_m,G_2,\ldots,G_{m-1},F_2,\ldots,F_{m-R},G_1)
\]
are superintegrable systems of rank $R$.
\end{prop}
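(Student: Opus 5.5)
The plan is to check the four conditions of Definition \ref{def:non-com} directly, using $r=R$ and $n=2m-1$. For the first tuplet the functions are listed in the order $f_1=F_m=H$, then $F_2,\dots,F_{m-1}$, then $G_2,\dots,G_{m-R}$, then $F_1=C$. Reading the definition literally, the first $r=R$ entries are $H,F_2,\dots,F_R$.

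\emph{Counting and independence.} The tuplet has $s=1+(m-2)+(m-R-1)+1=2m-1-R=n-R$ entries, so $r+s=n$ and condition (3) holds. When $R=m-1$ the $G$-block is empty and the count still works. All entries belong to the list of $n-1$ first integrals $C,F_2,G_2,\dots,F_{m-1},G_{m-1},H$. By \cite[Appendix A]{KKQTV} these are algebraically independent, so any subset is too, and condition (1) follows.

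\emph{Involutivity.} Condition (2) requires $H,F_2,\dots,F_R$ to Poisson commute with every entry of the tuplet. Commutation with $H$ and among the $F$'s is \eqref{PBC}, and commutation with $F_1=C$ holds because $C$ is a Casimir. The only nontrivial brackets are $\{F_i,G_j\}$ with $2\le i\le R$ and $2\le j\le m-R$. Here $i+j\le m$, so $\kappa(i,j)=i+j-m-1<0$, and the odd case of \eqref{PBO} gives $\{F_i,G_j\}=0$. The restriction of the $G$'s to indices at most $m-R$ is exactly what makes this work.

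\emph{Independence of Hamiltonian derivations.} For condition (4), the first item of Prop.~\ref{psum} says $F_1,\dots,F_m$ is a Liouville integrable system with $F_1=C$ a Casimir. Since $\Rk\PB=n-1=2(m-1)$, the non-Casimir members $F_2,\dots,F_m$ number $\frac12\Rk\PB$. Their Hamiltonian derivations are linearly independent at a generic point. (Otherwise the rank-$(m-1)$ Liouville system would violate condition (4).) Any subset, in particular $\ham_H,\ham_{F_2},\dots,\ham_{F_R}$, is then also independent there.

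\emph{Second tuplet.} I would apply the involution $\imath$ of \eqref{istar}. It satisfies $\{\imath x_i,\imath x_j\}=\{x_{n+1-i},x_{n+1-j}\}=-\imath(x_ix_j)$ for $i<j$, so $\imath$ is an anti-Poisson automorphism. It therefore preserves vanishing of brackets, algebraic independence, and linear independence of Hamiltonian derivations, since $\ham_{\imath f}=-\imath\circ\ham_f\circ\imath$. It also maps $F_k\mapsto G_k$ and fixes $H$ and $C$. Hence it sends the first tuplet to the second.

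The main obstacle is condition (4), i.e.\ making sure the independence of Hamiltonian vector fields genuinely follows from Prop.~\ref{psum}. If that is in doubt, the fallback is an explicit check: evaluate the differentials of $F_2,\dots,F_m$ at a convenient point, e.g.\ near $x_1=\dots=x_n=1$, and verify that their images under the Poisson matrix are independent.
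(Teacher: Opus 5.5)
Your proof is correct and follows the route the paper indicates. The paper gives no separate proof; it only remarks that the Poisson structure on $\cB$ (in particular $\{F_i,G_j\}=0$ for $\kappa(i,j)<0$ in \eqref{PBO}), together with the independence result of \cite{KKQTV} and Prop.~\ref{psum}, yields the claim, and your check of conditions (1)--(4) supplies exactly those details, with the bound $i+j\leq m$ as the key point and the anti-Poisson involution $\imath$ (cf.\ Lemma~\ref{AI}) handling the second tuplet.
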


The considerations in the previous section imply that one can view the symbols $H,C,F_i,G_i$ as independent variables (instead of functions) and regard the formulas \eqref{PBC} with \eqref{PBE} or \eqref{PBO} as the {\em definition} of a Poisson bracket of $\F$.
\begin{prop} \label{PAs} Depending on the parity of $n$ we have:
\begin{enumerate}[(i)]
    \item For $n\equiv 0$ the brackets \eqref{PBC} and \eqref{PBE} define a Poisson bracket on $\cB=\bbF(H,F_1,G_1,\dots,F_{m-1},G_{m-1})$ (where $H$ is a constant/Casimir).
    \item For $n\equiv 1$ the brackets \eqref{PBC} and \eqref{PBO} define a Poisson bracket on $\cB=\bbF(H,C,F_2,G_2,\dots,F_{m-1},G_{m-1})$ (where $C$ and $H$ are constants/Casimirs). 
\end{enumerate} 
\end{prop}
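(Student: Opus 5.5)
The plan is to reduce Proposition \ref{PAs} to Lemma \ref{LPB}. By Proposition \ref{psum} and the discussion preceding it, the functions generating $\cB$ are algebraically independent elements of $\cA_n$. For $n=2m$ these are $H,F_1,G_1,\dots,F_{m-1},G_{m-1}$, and for $n=2m-1$ they are $H,C,F_2,G_2,\dots,F_{m-1},G_{m-1}$; this is \cite[Appendix A]{KKQTV}. By \eqref{PBC}, \eqref{PBE} and \eqref{PBO}, their pairwise brackets, computed in $(\cA_n,\PB)$, are rational functions of the generators themselves. So we are exactly in the setting of \eqref{fijx}, with $f_{i,j}$ read off from those formulas.

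Two small points need checking before Lemma \ref{LPB} applies. First, in the odd case the right-hand side of \eqref{PBO} may involve $F_1$ or $G_1$ when $i-\kappa-1$ or $j-l$ equals $1$. These are replaced by $C$, since $F_1=G_1=C$, so the expressions are genuinely in $\cB$. Second, the formulas as stated only cover $\{F_i,G_j\}$ with $1\le i,j<m$. The remaining brackets, $\{G_j,F_i\}$, those involving $H$ and $C$, and $\{F_i,F_j\}$, $\{G_i,G_j\}$, are defined by antisymmetry and \eqref{PBC}. The Casimir property $\{C,\cdot\}=0$ holds already on $\cA_n$.

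With these in place, the bracket $\{\cdot,\cdot\}_\cB$ is defined on generators by \eqref{pbf} and extended to all of $\cB$ by the biderivation formula analogous to \eqref{eq:pb}, now in the variables $f_i$. This is well defined because $\cB$ is a field of rational functions in independent indeterminates. Bilinearity, antisymmetry and the Leibniz rule hold by construction. Only the Jacobi identity needs proof, and as noted after \eqref{eq:pb} it suffices to check it on triples of generators.

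For the Jacobi identity I follow the proof of Lemma \ref{LPB}. Consider the Jacobiator $R_{i,j,k}(f)$ of three generators, computed with $\{\cdot,\cdot\}_\cB$, which is a rational function of the generators. Substituting $f=f(x)$, the chain rule shows that $\{\cdot,\cdot\}_\cB$ evaluated at $f(x)$ agrees with the bracket of $\cA_n$ restricted to the subalgebra. Hence $R_{i,j,k}(f(x))$ equals the Jacobiator in $\cA_n$, which vanishes because \eqref{PB} is a Poisson bracket. Algebraic independence then forces $R_{i,j,k}\equiv 0$ as a rational function.

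The main obstacle is the legitimacy of the inputs rather than the argument itself: we rely on the algebraic independence and on the bracket formulas \eqref{PBE} and \eqref{PBO} as established in \cite{KKQTV}. A secondary check is that the denominators in \eqref{PBO}, namely $F_{i-\kappa+l-1}G_{j-l}+CH$, are nonzero elements of $\cB$. This follows from independence, or by exhibiting a point where they do not vanish. If one preferred a self-contained check, one could verify Jacobi directly on triples $(F_i,F_j,G_k)$ and $(F_i,G_j,G_k)$. In the even case this is a manageable case analysis on the signs of $\kappa$. In the odd case the rational products make it a serious computation, which is why the reduction to Lemma \ref{LPB} is the preferred route.
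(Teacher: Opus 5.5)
Your proof is correct and is essentially the argument the paper gives in the main text right after the statement. There, algebraic independence of the generators \cite[Appendix A]{KKQTV}, the brackets \eqref{PBC}, \eqref{PBE}, \eqref{PBO} realized in $(\cA_n,\PB)$, and Lemma~\ref{LPB} (whose proof your chain-rule remark makes explicit) are combined exactly as you do. The paper then adds a second, realization-free proof in Appendix~\ref{secB}, which verifies \eqref{2FG} directly through \eqref{iop} and \eqref{FFG} and deduces \eqref{2GF} from the anti-Poisson involution $\is$ of Lemma~\ref{AI}.

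A minor nit: your concern that $F_1$ or $G_1$ might appear in \eqref{PBO} is vacuous. For $2\le i,j\le m-1$ the $F$-indices there lie in $\{m+1-j,\dots,i\}$ and the $G$-indices in $\{m+1-i,\dots,j\}$, all of which are at least $2$.
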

The statement is clear, as the Poisson algebras of integrals inherit their structures as Poisson subalgebras, from the definition of the functions and the fact that the integrals are independent, see Lemma \ref{LPB}. However, the statement is nontrivial, which we illustrate by providing a proof, in Appendix \ref{secB}, which does not rely on the definitions of the functions $F_i$ and $G_j$, the original bracket \eqref{PB}, Lemma \ref{LPB}, and the results of \cite{KKQTV}.

\smallskip
Recently, homogeneous integrable generalizations of Lotka-Volterra \eqref{LV} were studied in \cite{treeL,treeP}, whereas inhomogeneous integrable generalizations were studied in \cite{KMQ,CHK,QPW}. Using the Poisson algebra of symmetries. We will obtain and study integrable generalizations of \eqref{LV}, by deforming the Hamiltonian using a Poisson subalgebra and the notion of commutant introduced in Section \ref{sec2}.

\section{Hamiltonian deformation using a subalgebra of integrals} \label{sec4}
We say that $\G=\fu(g_1,\ldots,g_l)$ is a \emph{subalgebra of integrals} if
\[
\{g_i,g_j\}_\F \in \G\subset\F=\FI(f_1)=\fu(f_1,\ldots,f_k).
\]
In the sequel, we will take $g_1=f_1$ ($=H$, the Hamiltonian) and assume that the functions $g_i$, $i=1,\ldots,l$, are functionally independent, so that $\G$ is a Poisson algebra, cf. Lemma \ref{LPB}. The \emph{commutant} of the  subalgebra of integrals $\G \subset \F$ is the set of partial Casimirs
$\C=C_\F(\G)$. We will aim to find rational (and functionally independent) functions $c_i$, so that $\C=\fu(c_1,\ldots,c_d)$ with $c_j\in\F$. As $H\in\C$, we will set $c_1=H$. Note that the functions $c_i$ do not have to Poisson commute with each other. We will also call each $c_i\in\C$ a commutant. The condition that $c\in\C$ is a commutant of $\G \subset \F$ is the system of partial differential equations 
\begin{equation} \label{compdes}
\{c, g_i\}= \sum_{j=1}^{k} \{f_j,g_i\} \frac{ \partial c}{\partial f_j}=0,\quad i=1,\ldots,l.   
\end{equation}
This provides the possibility for $c$ to take the form of polynomials or, more general, rational and even transcendental functions (as we will see in Section 6). 

To an algebra of integrals $\F=\fu(f_1,\ldots,f_m)$, with $H=f_1$, corresponds the Hamiltonian system
\begin{equation} \label{HSDET}
\dot f_i = \{f_i,H\}_\F = 0,\quad i=1,\ldots,m.
\end{equation}
That is, in the new $f$-coordinates the Hamiltonian system is trivialised; readily solved by $f_i=constant$ for all $i$.
We propose to deform the system \eqref{HSDET}, and the related system on $\cA_n$, by using a subalgebra  of integrals $G\subset \F$ and its commutant $\C$. We take  $\widehat{H}= \widehat{H}(g_1,\ldots,g_l)$ as the deformed Hamiltonian, and then the corresponding system admits the commutants $c_1,\ldots,c_d$ as its integrals of motion. The original Hamiltonian $H$ and any Casimirs are included in the integrals of these deformed systems.

Let us present more details on the approach we take towards solving the commutant constraints (\ref{compdes}), which in general is a difficult problem. Here, we rely on solving the systems of PDEs using the method of characteristics implemented in symbolic computation software  \cite{Map}. This provides sets of solutions 
\[
c_i = c_i(f_1,\ldots,f_m),\quad i=1,\ldots,d.
\]
We then use the solutions obtained to perform a change of variables
\[
z_i = c_i(f_1,\ldots,f_m) \quad i=1,\ldots,d.
\]
The other variables $z_i$ for $d+1,\ldots,m$ are chosen among the $f_i$ that are not part of the solution obtained already via solving the system of PDEs. If the system of PDEs (\ref{compdes}), in the new variables, takes the form 
\[
m(z_1,\ldots,z_m) \frac{\partial K}{\partial z_s} =0,
\quad s=d+1,\ldots,m,
\]
this implies that the general solution only depends on the already obtained integrals, cf. Example \ref{exm}.

This approach will be implemented in examples of commutants associated with Abelian and non-Abelian subalgebras of integrals. We now choose low-dimensional non-Abelian subalgebras to illustrate the procedure, deforming both the odd- and the even-dimensional system \eqref{LV}.

\subsection{Deformation of the odd $n=2m-1$ dimensional LV system \eqref{LV}} \label{d1}
Let $m>1$ be an integer. We consider the algebra of integrals $\F=\fu(H,F_2,\ldots,F_{m-1},G_2,\ldots,G_{m-1},C)$, where the nontrivial brackets are given by \eqref{PBO}.
We define the subalgebra of integrals
\begin{equation} \label{G1}
\G = \fu(H,F_{2},G_{m-1},C),
\end{equation}
which is non-Abelian as $\{F_{2},G_{m-1}\}_\F=CH-F_{2}G_{m-1}$. The constraints on a commutant, $L$, relative to this subalgebra, is given by the set of equations $\{F_2,L\}_\F=0$,  $\{G_{m-1},L\}_\F=0$,
which amounts to the following system of PDEs
\begin{equation}
\sum_{j=2}^{m-1} \{F_2,G_j\}_\F L_{G_j} =0, \quad \sum_{j=2}^{m-1} \{G_{m-1},F_j\}_\F L_{F_j} =0,  
\end{equation}
where subscripts denote derivation, i.e., $L_{F_j}=\partial L/\partial F_j$. 
Solving the system yields the following solutions
\[  G_{i},\quad i=2,\ldots,m-2, \]
and the rational functions
\begin{equation}
L_i=\frac{F_{i} G_{m-i+1}-C H }{F_{i-1} G_{m-i+1}+CH},\quad i=3,\ldots,m-1.
\end{equation}
\begin{prop}
The commutant of $\cal{G}$ is
$\C=\fu(H,C,G_2,\ldots,G_{m-2},L_3,\ldots,L_{m-1})$.
\end{prop}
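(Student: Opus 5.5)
The plan is to prove the proposition in three steps, in the spirit of Example \ref{exm}. First, every listed function lies in $\C$. Second, the listed functions are functionally independent. Third, completeness: after a birational change of coordinates on $\F$, the PDE system \eqref{compdes} forces any commutant to depend only on the listed functions.

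\emph{Membership.} $H$ and $C$ are Casimirs of $\F$, so they lie in $\C$. For $2\leqslant j\leqslant m-2$, the brackets \eqref{PBC} give $\{G_{m-1},G_j\}_\F=0$. Also $\kappa(2,j)=j-m+1<0$, so \eqref{PBO} gives $\{F_2,G_j\}_\F=0$.

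The functions $L_i$, $3\leqslant i\leqslant m-1$, depend only on $F_i,F_{i-1},G_{m-i+1},C,H$.
\begin{itemize}
\item For $F_2$: the only possibly nonzero bracket is $\{F_2,G_{m-i+1}\}_\F$. Here $\kappa=2-i<0$, so this bracket vanishes and $\{F_2,L_i\}_\F=0$.
\item For $G_{m-1}$: we need
\[
\{G_{m-1},L_i\}_\F=\{G_{m-1},F_i\}_\F\,\partial_{F_i}L_i+\{G_{m-1},F_{i-1}\}_\F\,\partial_{F_{i-1}}L_i=0 .
\]
Here $\kappa(i,m-1)=i-2$ and $\kappa(i-1,m-1)=i-3$.
\end{itemize}
This last identity is the main obstacle. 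I would attack it by writing $P_i:=\{F_i,G_{m-1}\}_\F$ from \eqref{PBO}. The key observation is that the products defining $P_i$ and $P_{i-1}$ share all factors but one in the numerator and one in the denominator. Comparing the index ranges suggests
\[
\frac{P_i}{P_{i-1}}=\frac{F_iG_{m-i+1}-CH}{F_{i-1}G_{m-i+1}+CH}=L_i .
\]
I would first confirm this ratio carefully against the index ranges in \eqref{PBO}. Assuming it holds, since $\partial_{F_i}L_i=G_{m-i+1}/D$ and $\partial_{F_{i-1}}L_i=-L_iG_{m-i+1}/D$ with $D=F_{i-1}G_{m-i+1}+CH$, the sum above becomes proportional to $P_i-L_iP_{i-1}=0$. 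The base case $i=3$ must be handled separately with $P_2=CH-F_2G_{m-1}$.

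\emph{Independence and completeness.} On $\F$, take the coordinates
\[
z=(H,C,G_2,\ldots,G_{m-2},L_3,\ldots,L_{m-1},F_2,G_{m-1}).
\]
This map is birational. Each $L_i$ is linear in $F_i$, so we can solve recursively
\[
F_i=\frac{L_i(F_{i-1}G_{m-i+1}+CH)+CH}{G_{m-i+1}},\qquad i=3,\ldots,m-1,
\]
starting from $F_2$. Birationality gives functional independence of the $2m-4$ listed functions at once.

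In the $z$-coordinates, every coordinate other than $G_{m-1}$ commutes with $F_2$, by the membership step. Therefore $\ham_{F_2}$ acts on a function $K(z)$ as
\[
\{K,F_2\}_\F=(F_2G_{m-1}-CH)\,\partial K/\partial G_{m-1}.
\]
Similarly, every coordinate other than $F_2$ commutes with $G_{m-1}$, so
\[
\{K,G_{m-1}\}_\F=(CH-F_2G_{m-1})\,\partial K/\partial F_2.
\]
Because $CH-F_2G_{m-1}\neq0$ in $\F$, the commutant conditions reduce to $\partial_{F_2}K=\partial_{G_{m-1}}K=0$. Hence $K$ is a rational function of the remaining coordinates, which are exactly the listed generators, giving $\C=\fu(H,C,G_2,\ldots,G_{m-2},L_3,\ldots,L_{m-1})$.
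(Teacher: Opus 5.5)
Your proof is correct. It uses the same device the paper relies on: a birational change of variables under which the commutant PDEs reduce to the vanishing of partial derivatives. The paper, however, does not prove the proposition for general $m$. It reads off $G_2,\ldots,G_{m-2},L_3,\ldots,L_{m-1}$ from a Maple method-of-characteristics computation. It then checks completeness only for $n=7$, using the coordinates $(L_3,F_3,G_2,G_3)$, which eliminate $F_2$ in favour of $L_3$. Your argument works uniformly in $m$.

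The ratio you flagged does hold, and needs no separate base case. By \eqref{PBO} with $\kappa=i-2$,
\[
P_i=\{F_i,G_{m-1}\}_\F=(CH-F_2G_{m-1})\prod_{l=1}^{i-2}\frac{F_{l+2}G_{m-1-l}-CH}{F_{l+1}G_{m-1-l}+CH},
\]
whose last factor is exactly $L_i$. Hence $P_i=L_iP_{i-1}$ for all $3\leqslant i\leqslant m-1$, with $P_2=CH-F_2G_{m-1}$ being the empty-product case.

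Keeping $F_2$ and $G_{m-1}$ themselves as the complementary coordinates is what makes the general case clean. Since every other coordinate commutes with both, the transformed PDEs are simply $\pm(CH-F_2G_{m-1})\,\partial_{G_{m-1}}K=0$ and $\pm(CH-F_2G_{m-1})\,\partial_{F_2}K=0$, with nothing further to compute. With the paper's choice, the transformed coefficients are already nontrivial rational functions for $n=7$. As a by-product, your argument shows $\F=\C(F_2,G_{m-1})$: $\F$ is generated by the commutant together with the two non-Casimir generators of $\G$.
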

We have $\Rk\PB=2m-4$, as there are two Casimirs, $H$ and $C$. The (linearly) deformed Hamiltonian $\widehat{H}= H + \alpha_1 F_2 + \alpha_2 G_{m-1}$ is Liouville integrable with $m-3$ integrals in involution of type $G_i$, as well as $m-3$ integrals in involution of type $L_i$. It is also superintegrable of rank 1, as there are in total $2m-3$ independent integrals,
\[
I=(\widehat{H},G_2,\ldots,G_{r-2},L_3,\ldots,L_{r-1},H,C),
\]
including the deformed Hamiltonian.

\begin{prop} \label{nsa}
Define ${\cal I}=\bbF(I)$. The new algebra of integrals is given by
\[
\{G_i,G_j\}_{\cal I}=0,\quad
\{L_{i+1},L_{j+1}\}_{\cal I}=0,\quad 
\{G_i,L_{j+1}\}_{\cal I}= \delta_{i+j+1,m+1} G_i L_{j+1},
\]
for all $1<i,j<m-1$, with Casimirs $\widehat{H},H,C$.
\end{prop}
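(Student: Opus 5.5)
The plan is a direct computation in $\F=\fu(H,C,F_2,\ldots,F_{m-1},G_2,\ldots,G_{m-1})$ using the brackets \eqref{PBC} and \eqref{PBO}, together with the Leibniz rule. The bracket $\{\cdot,\cdot\}_{\cal I}$ is just the restriction of $\{\cdot,\cdot\}_\F$, which makes sense by Lemma \ref{LPB} since the entries of $I$ are independent.

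\textbf{Casimirs and the vanishing brackets.} First dispose of the Casimirs. $H$ and $C$ are Casimirs of $\F$ by Proposition \ref{PAs}. The functions $G_i$ ($2\le i\le m-2$) and $L_{j+1}$ ($2\le j\le m-2$) lie in the commutant $\C$ of $\G$. They therefore commute with $F_2$, $G_{m-1}$, and hence with $\widehat H=H+\alpha_1F_2+\alpha_2G_{m-1}$. So $\widehat H,H,C$ are Casimirs of ${\cal I}$. Next, $\{G_i,G_j\}_{\cal I}=0$ is immediate from \eqref{PBC}. For $\{L_{i+1},L_{j+1}\}_{\cal I}=0$, note that $L_{i+1}$ involves only $F_i,F_{i+1},G_{m-i},H,C$, and each of the $F$'s Poisson commutes with the other $F$'s. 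So $\{L_{i+1},L_{j+1}\}$ expands into terms of the form $\{F_a,G_b\}$ with $a\in\{j,j+1\}$ and $b=m-i$, plus symmetric ones. I will check these cancel via the same ratio identity used in the last step below. Alternatively, I expect it to follow from writing $\{L_{i+1},\cdot\}$ in terms of $\{G_{m-i},\cdot\}$ using the $G$--$L$ formula.

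\textbf{The bracket $\{G_i,L_{j+1}\}$.} Write $L_{j+1}=N/D$ with
\[
N=F_{j+1}G_{m-j}-CH,\qquad D=F_jG_{m-j}+CH.
\]
Since $G_i$ commutes with $G_{m-j}$, $H$ and $C$, the Leibniz rule gives
\[
\{G_i,L_{j+1}\}=\frac{G_{m-j}}{D^2}\Big(D\,\{G_i,F_{j+1}\}-N\,\{G_i,F_j\}\Big).
\]
Now split by the sign of $\kappa(j+1,i)=i+j-m$.

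If $i+j<m$, both $\kappa(j+1,i)$ and $\kappa(j,i)$ are negative, so both brackets vanish by \eqref{PBO}.

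If $i+j=m$, then $\kappa(j,i)=-1$ and $\kappa(j+1,i)=0$. In this case $\{G_i,F_{j+1}\}=F_{j+1}G_i-CH=N$ (using $i=m-j$), so the bracket equals $G_iN/D=G_iL_{j+1}$.

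\textbf{The case $i+j>m$ (main obstacle).} The key step, and the only real obstacle, is this case. Here one must show the cancellation
\[
\frac{\{F_{j+1},G_i\}}{\{F_j,G_i\}}=\frac{N}{D}.
\]
Put $\kappa'=i+j-m\ge1$. In \eqref{PBO} the starting index $k-\kappa$ equals $m-i+1$ both for $(k,\kappa)=(j+1,\kappa')$ and for $(k,\kappa)=(j,\kappa'-1)$. Consequently:
\begin{itemize}
\item The numerator product for $F_{j+1}$ is that for $F_j$ times the single extra factor at $l=\kappa'$, namely $F_{j+1}G_{m-j}-CH=N$.
\item The denominator product likewise gains exactly the factor $F_jG_{m-j}+CH=D$.
\end{itemize}
This telescoping gives the ratio $N/D$, and hence $\{G_i,L_{j+1}\}=0$. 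I would verify the index bookkeeping in this telescoping carefully, since it is where errors are likely. Collecting the three cases gives $\{G_i,L_{j+1}\}_{\cal I}=\delta_{i+j+1,m+1}G_iL_{j+1}$.
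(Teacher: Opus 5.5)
Your proof is correct. The paper states Proposition \ref{nsa} without a separate proof, but your computation (chain rule through $F_j,F_{j+1},G_{m-j}$, the case split on $\kappa$, and the telescoping of the products in \eqref{PBO}) is the same technique the paper uses for Proposition \ref{p14} and for the bracket proposition in Section \ref{sec7}.

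The step you left as a promise closes at once. For $i<j$, the terms $\{F_a,G_{m-j}\}_\F$ with $a\in\{i,i+1\}$ vanish because $\kappa(a,m-j)<0$. What remains is $\frac{\partial L_{i+1}}{\partial G_{m-i}}\{G_{m-i},L_{j+1}\}_\F$, and this is zero by your $G$--$L$ formula since $(m-i)+j>m$. Likewise, your formula with $i=m-1$ gives $\{G_{m-1},L_{j+1}\}_\F=0$ directly, so the Casimir property of $\widehat H$ does not need to rely on the commutant proposition.
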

\begin{cor}
    Proposition \ref{nsa} implies that the tuple of functions $(I_1,I_2,\ldots,I_{2m-4-R},H,C)$ is superintegrable of rank $R$ for each $R=1,2,\ldots,m-2$.
\end{cor}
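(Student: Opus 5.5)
The plan is to check the four conditions of Definition \ref{def:non-com} directly. The ambient Poisson algebra is $\F=\fu(H,C,F_2,\ldots,F_{m-1},G_2,\ldots,G_{m-1})$, which has dimension $2m-2$ and rank $2m-4$. The only input needed is the bracket table of Proposition \ref{nsa}, together with the fact that $\widehat{H}\in\G$ Poisson commutes with the whole commutant $\C$.

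\textbf{Bookkeeping.} With the ordering
\[
I=(\widehat{H},G_2,\ldots,G_{m-2},L_3,\ldots,L_{m-1},H,C),
\]
the tuple $(I_1,\ldots,I_{2m-4-R},H,C)$ consists of:
\begin{itemize}
\item $\widehat{H}$;
\item all of $G_2,\ldots,G_{m-2}$, which are $m-3$ functions;
\item $2m-4-R-(m-2)=m-2-R$ of the $L$'s, namely $L_3,\ldots,L_{m-R}$;
\item $H$ and $C$.
\end{itemize}
Hence $s=2m-2-R$, and condition (3), $r+s=\dim=2m-2$, holds with $r=R$.

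\textbf{Condition (1).} The tuple is a subtuple of the $2m-3$ functions in $I$. Their algebraic independence is asserted before Proposition \ref{nsa}. I would reconfirm it by a Jacobian evaluation at a generic point, which is routine.

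\textbf{Condition (2).} The first $R$ entries are $\widehat{H},G_2,\ldots,G_R$. This uses $R\le m-2$, so that these entries are all $G$'s. I would check that each of them commutes with every entry of the tuple.
\begin{itemize}
\item $\widehat{H}$ lies in $\G$, and every other entry lies in $\C=C_\F(\G)$, so all of these brackets vanish. This is why $\widehat{H}$ is a Casimir of ${\cal I}$.
\item $H$ and $C$ are Casimirs.
\item The $G$'s pairwise commute.
\item By Proposition \ref{nsa}, $\{G_i,L_{j+1}\}\neq0$ only when $i+j=m$, i.e.\ for the pair $(G_i,L_{m-i+1})$. For $i\le R$ we get $m-i+1\ge m-R+1$, and this index exceeds $m-R$, the largest $L$-index in the tuple. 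So these partners are excluded.
\end{itemize}

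\textbf{Condition (4).} This is the step I expect to need the most care, since it is the only one not readable directly from Proposition \ref{nsa}. I would show that the Hamiltonian derivations of $\widehat{H},G_2,\ldots,G_R$ are linearly independent at a generic point. The method is to apply them to the test functions $G_{m-1},L_{m-1},L_{m-2},\ldots,L_{m-R+1}$ and show the resulting $R\times R$ matrix is nonsingular.
\begin{itemize}
\item For $\widehat{H}$: it kills every $L$, and
\[
\{\widehat{H},G_{m-1}\}=\alpha_1\{F_2,G_{m-1}\}_\F=\alpha_1(CH-F_2G_{m-1})\neq0,
\]
assuming $\alpha_1\ne0$.
\item For each $G_i$ with $2\le i\le R$: it kills $G_{m-1}$ because the $G$'s commute, and it kills every $L_k$ except $L_{m-i+1}$. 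On that one it gives $G_iL_{m-i+1}$, which is generically nonzero.
\end{itemize}
The matrix is therefore diagonal with nonzero diagonal entries, which proves independence.

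The subtlety to watch in this step is that the $G_i$, $L_j$ with their brackets in Proposition \ref{nsa} are computed within $\F$. The derivations live on all of $\F$, so one needs nonvanishing of those specific brackets there, which follows from \eqref{PBO}. If $\alpha_1=0$, I would use $\alpha_2\ne0$ and the test function $F_2$ instead; the $G_i$'s action on $F_2$ is then absorbed by a triangular elimination.
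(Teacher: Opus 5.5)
Your proposal is correct and follows the paper's (implicit) argument. The paper gives no proof beyond citing Proposition \ref{nsa}, i.e.\ it reads the rank-$R$ structure off the bracket table through exactly the pairing $G_i\leftrightarrow L_{m-i+1}$ that you use for condition (2). Your explicit check of condition (4) inside $\F$ is a worthwhile addition that the paper omits, since in $\F$, unlike in ${\cal I}$, $\widehat{H}$ is not a Casimir. In the $\alpha_1=0$ case no elimination is needed at all, because $\{G_i,F_2\}_\F=0$ for $i\le m-2$ by \eqref{PBO}.
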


\noindent
Let us consider the case $n=7$ ($m=4$) in some more detail.
\begin{exam}
For $n=7$ we have $\F=\fu(H,F_2,F_3,G_2,G_3,C)$. Nonzero brackets are given by 
\[
\{F_2,G_3\}_\F=C H-F_2 G_3, \quad 
\{F_3,G_2\}_\F= C H-F_3 G_2, \quad
\{ F_3, G_3\}_\F= \frac{(F_3 G_2 - C H)(C H-F_2 G_3)}{C H + F_2 G_2}.
\]
We observe that $\G=\fu(H,F_2,G_3,C)$ is a subalgebra of integrals of $\F$. A partial Casimir $K\in \F$ should satisfy:
\begin{equation} \label{PCPDE}
\begin{split}
\{K,F_2\}_\F&=(F_2 G_3 - C H)K_{G_3}=0,\\
\{K,G_3\}_\F&= (C H - F_2 G_3) K_{F_2} +\frac{(F_3 G_2 - C H)(C H - F_2 G_3)}{C H + F_2 G_2} K_{F_3}  =0.    
\end{split}
\end{equation}
Using the method of characteristics, implemented in Maple \cite{Map}, we find two functionally independent functions
\[
G_2,\quad L_3=\frac{F_3 G_2-C H }{C H + F_2 G_2}.
\]
We now prove that the system of PDEs does not admit other integrals than $(H,C,G_2,L_3)$. Consider the birational change of variables
\[
z_1  = L_3, \quad z_2=F_3, \quad z_3 = G_2, \quad z_4 = G_3,
\]
where $C,H$ are constants. In terms of these $z$-variables, the PDEs \eqref{PCPDE} read
\[ \frac{ \left(C H \left(z_4+z_1 \left(z_3+z_4\right)\right)-z_1 z_3 z_4\right)}{z_1 z_3} K_{z_4}=0,\quad \frac{ \left(C H \left(z_4+z_1 \left(z_3+z_4\right)\right)-z_1 z_3 z_4\right)}{z_1 z_3} K_{z_2}=0.
\]
This means the general solution is a function of the known solutions only and no other commutants exist, we have $\C=C_\F(\G)=\bbF(H,C,G_2,L_3)$. We will use $\G$ to deform the system \eqref{HSDET}, that is $
\dot{F_2}=\dot{F_3}=\dot{G_2}=\dot{G_3}=0$,
as well as the system $\dot{x}=\{x,H\}$ with $H=\sum_{i=1}^7 x_i$, see eq. \eqref{LV}. With deformed Hamiltonian
$\widehat{H} = H + \alpha_1 F_2 +\alpha_2 G_3$ and variables $X=(F_2,F_3,G_2,G_3)$ we obtain
\begin{equation}\label{4ds2}
\dot{X}=\{X,\widehat{H}\}_\F=\begin{pmatrix}
\alpha_{2} \left(C H -F_{2} G_{3}\right)\\[1mm]
-\dfrac{\beta_{2} \left(C H -F_{2} G_{3}\right) \left(C H -F_{3} G_{2}\right)}{C H +F_{2} G_{2}}\\[1mm]
0\\[1mm]
-\alpha_{1} \left(C H -F_{2} G_{3}\right),
\end{pmatrix}
\end{equation}
with integrals given by $\widehat{H},G_2,L_3$. Hence, this system is Liouville integrable (superintegrable of rank 2) and superintegrable of rank 1. Enlarging the space with the Casimirs, gives us a 6-dimensional system that is Liouville integrable (superintegrable of rank 2) and superintegrable of rank 1. In terms of the initial coordinates, the system $\dot x=\{x,\widehat{H}\}$ with Hamiltonian
\begin{equation} \label{tHt}
\widehat{H}= \sum_{i=1}^7 x_i + \frac{\alpha_{1} \left(x_{1}+x_{2}+x_{3}\right) x_{5} x_{7}}{x_{4} x_{6}}+\frac{\alpha_{2} \left(x_{7}+x_{6}+x_{5}+x_{4}+x_{3}\right) x_{1}}{x_{2}},
\end{equation}
reads, with anti-symmetric matrix $A$ given by $A_{i,j}=1$, $i<j$, 
\begin{align*}
\dot x_1&=x_{1} \left(\sum_{j=1}^7 A_{1,j}x_j+\beta_{1}\frac{x_{7} x_{5} }{x_{4} x_{6}}\left(x_{2}+x_{3}\right) \right)\\
\dot x_2&=x_{2} \left(\sum_{j=1}^7 A_{2,j}x_j+\beta_{1}\frac{x_{7} x_{5} }{x_{4} x_{6}}\left(-x_{1}+x_{3}\right) \right)\\
\dot x_3&=x_{3} \left(\sum_{j=1}^7 A_{3,j}x_j+\beta_{1}\frac{x_{5} x_{7}}{x_{4} x_{6}} \left(-x_{1}-x_{2}\right) + \beta_{2}\frac{x_{1}}{x_{2}} \left(x_{4}+x_{5}+x_{6}+x_{7}\right)\right)\\
\dot x_4&=x_{4} \left(\sum_{j=1}^7 A_{4,j}x_j+\beta_{2}\frac{ x_{1}}{x_{2}} \left(-x_{3}+x_{5}+x_{6}+x_{7}\right)\right)\\
\dot x_5&=x_{5} \left(\sum_{j=1}^7 A_{5,j}x_j+\beta_{2}\frac{x_{1}}{x_{2}} \left(-x_{3}-x_{4}+x_{6}+x_{7}\right)\right)\\
\dot x_6&=x_{6} \left(\sum_{j=1}^7 A_{6,j}x_j
+\beta_{2}\frac{x_{1}}{x_{2}} \left(-x_{3}-x_{4}-x_{5}+x_{7}\right)\right)\\
\dot x_7&=x_{7} \left(\sum_{j=1}^7 A_{7,j}x_j
+\beta_{2}\frac{x_{1}}{x_{2}} \left(-x_{3}-x_{4}-x_{5}-x_{6}\right)\right),
\end{align*}
and it possesses the following five integrals: $\widehat{H}$ defined by \eqref{tHt}, and
\[
H=\sum_{i=1}^7 x_i,\quad
\frac{1}{G_2L_3}=\frac{\left(x_{4}+x_{5}+x_{6}+x_{7}\right) x_{5}}{x_{4} \left(x_{6}+x_{7}\right)},\quad
\frac{L_3}{C}=\frac{\left(x_{7}+x_{6}+x_{5}\right) x_{6}}{x_{5} x_{7}},\quad
C=\frac{x_{1} x_{3} x_{5} x_{7}}{x_{2} x_{4} x_{6}}.
\]
It is Liouville integrable (superintegrable of rank 3), as well as superintegrable of rank 2. (As the dimension has gone up, it is no longer superintegrable of rank 1.) 
\end{exam}
The new integrals yield a new algebra of integrals, see Proposition \ref{nsa}, of which one can take a Poisson subalgebra and repeat the process. 
 
\subsection{Deformation of the even $n=2m$ dimensional system \eqref{LV}} 
For $1<m\in\mathbb{N}$, we consider the space $\F=\bbF(X)$ with $X=(H,F_1,\ldots,F_{m-1},G_1,\ldots,G_{m-1})$, on which the non-trivial brackets are given by \eqref{PBE}. We define the subalgebra of integrals
\[
\G=\bbF(H,F_{1},G_{m-1}),
\]
which is non-Abelian, as $\{F_{1},G_{m-1}\}_\F=-F_{1}G_{m-1}$. A function $J$ is a commutant of $\G$ if 
\begin{equation} \label{Jpde}
\begin{split}
    \{F_1,J\}_\F &= \sum_{j=1}^{m-1} \{F_1,G_j\}_\F J_{G_j} =0 \\
    \{G_{m-1},J\}_\F &= \sum_{j=1}^{m-1} \{G_{r-1},F_j\}_\F J_{F_j} =0.
\end{split}
\end{equation}

\begin{prop}
The system of PDEs \eqref{Jpde} is satisfied by
\begin{equation}
\widetilde{F}_i = F_{m-i} - F_1  \frac{ G_i}{G_{m-1}},\quad
\widetilde{G}_i = \frac{G_i}{G_{m-1}},\quad  i=1,\ldots,m-2
\end{equation}
The new algebra of integrals is given by
\[
\{\widetilde{G}_i,\widetilde{F}_j\}_{\widetilde{\F}}= \begin{cases}
    \widetilde{G}_j \widetilde{F}_i,\quad  i \geq j,\\
    \widetilde{G}_i \widetilde{F}_j,\quad  i < j.
\end{cases}
\]
\end{prop}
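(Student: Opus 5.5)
The plan has two parts. First, verify directly that each $\widetilde F_i$ and $\widetilde G_i$ annihilates both Hamiltonian derivations in \eqref{Jpde}. Second, compute the brackets among the new functions with the Leibniz rule and the brackets \eqref{PBC}, \eqref{PBE}, and rewrite each result in the tilde variables. Throughout I treat $H,F_k,G_k$ as independent coordinates carrying the bracket of Proposition \ref{PAs}(i). So no reference to the $x$-variables is needed, apart from Lemma \ref{LPB} to justify that this bracket is Poisson.

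For the commutant property, the key observation is that the value of $\kappa$ in \eqref{kij} is determined in each relevant bracket. For $\{F_1,G_i\}$ we have $\kappa=i-m<0$ for all $i\le m-1$, so $\{F_1,G_i\}_\F=-F_1G_i$. For $\{F_{m-i},G_{m-1}\}$ with $1\le i\le m-2$ we have $\kappa=m-i-2\ge 0$, so $\{F_{m-i},G_{m-1}\}_\F=-F_1G_i$.

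With these I would argue as follows.
\begin{itemize}
\item $\{F_1,\widetilde G_i\}=G_{m-1}^{-2}\big(-F_1G_iG_{m-1}+G_iF_1G_{m-1}\big)=0$.
\item $\{G_{m-1},\widetilde G_i\}=0$, since the $G$'s commute.
\item $\{F_1,\widetilde F_i\}=0$, since $F_1$ commutes with $F_{m-i}$, with $F_1$, and with $\widetilde G_i$.
\item $\{G_{m-1},\widetilde F_i\}=F_1G_i-\widetilde G_i\,F_1G_{m-1}=0$.
\end{itemize}
This establishes that the listed functions solve \eqref{Jpde}. Functional independence of the $2m-2$ functions $\widetilde F_i,\widetilde G_i$, together with $H$, is clear from triangularity: $\widetilde F_i$ involves $F_{m-i}$ linearly with coefficient $1$.

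For the new algebra, I would expand $\{\widetilde G_i,\widetilde F_j\}$ by Leibniz. Since the $G$'s commute, and since $\{G_k,F_1\}=F_1G_k$ for all $k$, the terms coming from $-F_1\widetilde G_j$ give $-\widetilde G_j\{\widetilde G_i,F_1\}=0$ by the first bullet. So one is left with
\[
\{\widetilde G_i,\widetilde F_j\}=\frac{\{G_i,F_{m-j}\}}{G_{m-1}}-\frac{G_i\{G_{m-1},F_{m-j}\}}{G_{m-1}^2}.
\]
Here $\{G_{m-1},F_{m-j}\}=F_1G_j$. For $\{F_{m-j},G_i\}$ we have $\kappa=i-j-1$, so I would split into cases.
\begin{itemize}
\item If $i<j+1$, i.e.\ $i\le j$, the bracket is $-F_{m-j}G_i$.
\item If $i>j$, the bracket is $-F_{m-i}G_j$.
\end{itemize}
In each case the result should combine into a product of a $\widetilde G$ and an $(F_{m-\cdot}-F_1\widetilde G_\cdot)$, i.e.\ $\widetilde G\,\widetilde F$. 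For example, for $i>j$ one gets
\[
\frac{F_{m-i}G_j}{G_{m-1}}-\frac{G_iF_1G_j}{G_{m-1}^2}=\widetilde G_j\widetilde F_i .
\]
The case $i\le j$ similarly gives $\widetilde G_i\widetilde F_j$ (the boundary case $i=j$ can be placed in either branch, since both expressions agree).

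Finally I would check the remaining brackets, $\{\widetilde G_i,\widetilde G_j\}=0$ and $\{\widetilde F_i,\widetilde F_j\}$. The latter needs $\{F_{m-i},\widetilde G_j\}$, which again comes from \eqref{PBE}, and I expect it to cancel to $0$ by a symmetric computation.

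The main obstacle is this last bookkeeping: tracking which branch of \eqref{PBE} applies in each bracket, including boundary indices, and confirming that every result closes on the tilde variables rather than leaving stray $F_1$ or $G_{m-1}$ factors. I would handle it by writing everything over the common denominator $G_{m-1}^2$ and comparing coefficients.
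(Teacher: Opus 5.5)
Your proposal is correct. The paper states this proposition without proof; only the case $n=8$ is checked, by Maple, in the example that follows. Your direct verification via the Leibniz rule and the $\kappa$-case split of \eqref{PBE} is the natural argument, and it mirrors the computation the paper gives for the analogous $J_k$ proposition in Section~\ref{eds}.

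There are two small loose ends.

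First, there are $2(m-2)=2m-4$ functions $\widetilde F_i,\widetilde G_i$, not $2m-2$. Independence is not part of the statement anyway.

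Second, the bracket you only expected to vanish follows in one line from what you already computed. Since $\{F_1,\widetilde G_k\}_\F=\{\widetilde G_i,\widetilde G_j\}_\F=0$, you get
\[
\{\widetilde F_i,\widetilde F_j\}_\F=F_1\bigl(\{F_{m-j},\widetilde G_i\}_\F-\{F_{m-i},\widetilde G_j\}_\F\bigr).
\]
Your own calculation also gives
\[
\{F_{m-j},\widetilde G_i\}_\F=-\{\widetilde G_i,\widetilde F_j\}_\F=-\widetilde G_{\min(i,j)}\widetilde F_{\max(i,j)}.
\]
This expression is symmetric in $i,j$, so $\{\widetilde F_i,\widetilde F_j\}_\F=0$.
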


\begin{exam}
For $n=8$ we have $\F=\bbF(H,F_1,F_2,F_3,G_1,G_2,G_3)$. The nonzero brackets are given by 
\begin{align*}
\{F_1,G_1\}_\F&=-F_1 G_1,\quad \{F_1,G_2\}_\F=-F_1 G_2,\quad \{F_1,G_3\}_\F=-F_1 G_3 \\
\{F_2,G_1\}_\F&=-F_2 G_1,\quad \{F_2,G_2\}_\F=-F_2 G_2 ,\quad \{F_2, G_3\}_\F=-F_1 G_2 \\
\{F_3,G_1\}_\F&=-F_3 G_1,\quad \{F_3,G_2\}_\F =- F_2 G_1 ,\quad \{F_3,G_3\}_\F=-F_1 G_1.
\end{align*}
We observe that $\G=\fu(H,F_1,G_3)$ is subalgebra of integrals of $\F$. A partial Casimir $K\in \F$ should satisfy:
\begin{equation} \label{pdese}
\begin{split}
\{F_1,K\}_\F&= -F_1\left(G_1 K_{G_1} + G_2 K_{G_2} + G_3 K_{G_3}  \right) =0,\\
\{G_3,K\}_\F&= F_1\left( G_3 K_{F_1} + G_2 K_{F_2} + G_1 K_{F_3}  \right)  =0.
\end{split}
\end{equation}
Using the method of characteristics, implemented in Maple \cite{Map}, we find four functionally independent functions
\[
\widetilde{F}_1 = F_3-F_1 \frac{G_1}{G_3},\quad \widetilde{F}_2 = F_2 -F_1 \frac{G_2}{G_3},\quad \widetilde{G}_1= \frac{G_1}{G_3},\quad \widetilde{G}_2= \frac{G_2}{G_3}.
\]
We will now prove the system does not admit other integrals than $(H,\widetilde{F}_1, \widetilde{F}_2, \widetilde{G}_1,\widetilde{G_2})$. Using the birational change of variables
\[ 
z_1 = F_1 , z_2 = \frac{(-F_1 G_1 + F_3 G_3)}{G_3} , z_3 = \frac{(-F_1 G_2 +  F_2 G_3)}{G_3}, z_4 = \frac{G_1}{G_3}, z_5 = \frac{G_2}{G_3} ,  z_6 = G_3,
\]
whose inverse is given by
\[
F_{1} = z_{1}, F_{2} = z_{1} z_{5}+z_{3}, F_{3} = z_{1} z_{4}+z_{2}, G_{1} = z_{4} z_{6}, G_{2} = z_{5} z_{6}, G_{3} = z_{6},
\]
the PDEs \eqref{pdese} reduce to $-z_1 z_6 K_{z_6} = z_1 z_6  K_{z_1} = 0$, which means $K$ does not depend on the variables $z_1$ and $z_6$. All other variables are integrals. So, the commutant of $\G\subset\F$ is $\C=\fu(\widetilde{F}_1,\widetilde{F}_2, \widetilde{G}_1,\widetilde{G}_2,H)$. We use $\G$ to deform the system \eqref{HSDET}, i.e.
\[
\dot{F_1}=\dot{F_2}=\dot{F_3}=\dot{G_1}=\dot{G_2}=\dot{G_3}=\dot{H}=0,
\]
as well as the system $\dot{x}=\{x,H\}$ with $H=\sum_{i=1}^7 x_i$, cf. \eqref{LV}.
We take a linear combination $\widehat{H}=H+ \alpha_1 F_1 + \alpha_2 G_3$.
This gives rise to the system, for $X=(F_1,F_2,F_3,G_1,G_2,G_3)$,
\begin{equation}\label{4ds8}
\dot{X}=\{X, \widehat{H}\}_\F=\begin{pmatrix}
-\alpha_2 F_1 G_3\\
-\alpha_2 F_1 G_2 \\
-\alpha_2 F_1 G_1 \\
\alpha_1 F_1 G_1 \\
\alpha_1 F_1 G_2 \\
\alpha_1 F_1 G_3
\end{pmatrix},
\end{equation}
which has five integrals, $\widehat{H},\widetilde{F}_1,\widetilde{F}_2,\widetilde{G}_1,\widetilde{G}_2$. It is both Liouville integrable, superintegrable of rank 3 with commuting integrals $(\widehat{H},\widetilde{F}_1,\widetilde{F}_2)$ and $(\widehat{H},\widetilde{G}_1,\widetilde{G}_2)$, as well as superintegrable of rank 1. (And so is the extended 7-dimensional system one obtains by including the Casimir function $H$.) 

Going back to the realisation in terms of the initial $x$-variables, the deformed Hamiltonian is 
\begin{equation} \label{tilH8}
\widehat{H}= \sum_{i=1}^7 x_i +  \alpha_1  \frac{(x_1 + x_2) x_4 x_6 x_8}{x_3 x_5 x_7}  + \alpha_2 \frac{ x_1 (x_3 + x_4 + x_5 + x_6 + x_7 + x_8)}{x_2}
\end{equation}
which yields the ODE,
\begin{equation} \label{even8}
\begin{split}
\dot x_1&=x_{1} \left( \sum_{j=1}^8 A_{1,j} x_j +\alpha_1 \frac{x_{2} x_{4} x_{6} x_{8}}{x_{3} x_{5} x_{7}} \right),\\
\dot x_2&=x_{2} \left( \sum_{j=1}^8 A_{1,j} x_j - \alpha_1 \frac{x_{1} x_{4} x_{6} x_{8}}{x_{3} x_{5} x_{7}}  \right),\\
\dot x_3&=x_{3} \left( \sum_{j=1}^8 A_{1,j} x_j +\alpha_2 \frac{x_1}{x_2}(x_{4}+x_{5}+x_{6}+x_{7}+x_{8}) \right),\\
\dot x_4&=x_{4} \left( \sum_{j=1}^8 A_{1,j} x_j +\alpha_2 \frac{x_1}{x_2}(-x_{3}+x_{5}+x_{6}+x_{7}+x_{8}) \right),\\
\dot x_5&=x_{5} \left( \sum_{j=1}^8 A_{1,j} x_j +\alpha_2 \frac{x_1}{x_2}(-x_{3}-x_{4}+x_{6}+x_{7}+x_{8}) \right),
\end{split}
\end{equation}
\begin{align*}
\dot x_6&=x_{6} \left( \sum_{j=1}^8 A_{1,j} x_j +\alpha_2 \frac{x_1}{x_2}(-x_{3}-x_{4}-x_{5}+x_{7}+x_{8}) \right),\\
\dot x_7&=x_{7} \left( \sum_{j=1}^8 A_{1,j} x_j +\alpha_2 \frac{x_1}{x_2}(-x_{3}-x_{4}-x_{5}-x_{6}+x_{8}) \right),\\
\dot x_8&=x_{8} \left( \sum_{j=1}^8 A_{1,j} x_j +\alpha_2 \frac{x_1}{x_2}(-x_{3}-x_{4}-x_{5}-x_{6}-x_{7}) \right).
\end{align*}
This system admits six integrals: $\widehat{H}$ as given by \eqref{tilH8}, and
\begin{align*}
H&=\sum_{i=1}^8 x_i\\    
\widetilde{F}_1/H&=\frac{\left(x_{3}+x_{4}+x_{5}+x_{6}\right) x_{8}}{\left(x_{8}+x_{7}+x_{6}+x_{5}+x_{4}+x_{3}\right) x_{7}}\\
\widetilde{F}_2/H&=\frac{\left(x_{3}+x_{4}\right) x_{6} x_{8}}{\left(x_{8}+x_{7}+x_{6}+x_{5}+x_{4}+x_{3}\right) x_{5} x_{7}}\\
\widetilde{G}_1&=\frac{\left(x_{8}+x_{7}\right) x_{5} x_{3}}{\left(x_{8}+x_{7}+x_{6}+x_{5}+x_{4}+x_{3}\right) x_{6} x_{4}}\\
\widetilde{G}_2&=\frac{\left(x_{8}+x_{7}+x_{6}+x_{5}\right) x_{3}}{x_{4} \left(x_{8}+x_{7}+x_{6}+x_{5}+x_{4}+x_{3}\right)}.
\end{align*}
Similar to the system \eqref{4ds8} and its 7-dimensional variant, the system \eqref{even8} is Liouville integrable: the functions $\widehat{H},H$ commute with each other and with the tuples $(\widetilde{F}_1/H,\widetilde{F}_2/H)$ and $(\widetilde{G}_1,\widetilde{G}_2)$. It is also superintegrable of rank 2.
\end{exam}

By analogy to commutants in the context of Cartan subalgebras of simple Lie algebras, cf.  \cite{cam23,cam23c}, in the next two sections we shift our focus to Abelian subalgebras. In Section \ref{sec5}, we consider the generators of the subalgebra to be a commuting subset of those of the algebra of integrals. In Section \ref{sec6}, we consider single generator which is a multiparameter linear combination of generators of the algebra of integrals. In both cases we obtain Liouville (super)integrable deformations of \eqref{LV}. 

\section{Integrable deformations of Lotka-Volterra system \eqref{LV}, using an Abelian subalgebra of integrals}
\label{sec5}
We provide integrable deformations of the Lotka-Volterra system \eqref{LV} in arbitrary dimensions, using an Abelian subalgebra of integrals. In Section \ref{ods} we consider deformations of the odd-dimensional systems, and in Section
\ref{eds} we deform the even-dimensional cases.
\subsection{Deformation of the odd $n=2m-1$ dimensional system \eqref{LV}} \label{ods}
Let $m>1$ be an integer. We consider the space $\F=\fu(H,F_2,\ldots,F_{m-1},G_2,\ldots,G_{m-1},C)$, where the nontrivial brackets are given by \eqref{PBO}. For $l\in\{0,1,\ldots,m-4\}$, we define the subalgebra of integrals
\[
\G_l=\fu(H,F_{m-1-l},F_{m-l},\ldots,F_{m-1}).
\]
The constraints on the commutants relative to this subalgebra is given by the set of equations $\{F_i,L\}_\F=0$, $i=m-1-l,\ldots,m-1$,
which amounts to the following system of PDEs
\begin{equation} \label{spdes}
\sum_{j=2}^{m-1} \{F_i,G_j\}_\F L_{G_j} =0,\quad i=m-1-l,\ldots,m-1.   
\end{equation}
Solving the system \eqref{spdes}, using Maple, yields the rational functions
\begin{equation}\label{L}
L_k=\frac{F_{k +1} G_{m -k}-C H }{F_{k +1} G_{m-k-1}+CH},\quad k=1,\ldots,m-l-3.
\end{equation}
\begin{prop} \label{p14}
The commutant of $\G_l$ includes
$\fu(H,C,F_2,\ldots,F_{m-1},L_1,\ldots,L_{m-l-3})$.
\end{prop}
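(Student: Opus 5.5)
The statement only asserts an inclusion, so no completeness argument (of the birational type used in Example \ref{exm}) is needed. It suffices to check that each listed generator Poisson commutes with every generator $F_i$, $m-1-l\leqslant i\leqslant m-1$, of $\G_l$. Since $\{\cdot,F_i\}_\F$ is a derivation, the set of such functions is closed under the field operations, so it will then contain the whole field $\fu(H,C,F_2,\ldots,F_{m-1},L_1,\ldots,L_{m-l-3})$.

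The generators $H$ and $C$ are Casimirs. The $F_j$ commute with all $F_i$ by \eqref{PBC}. So everything reduces to showing $\{F_i,L_k\}_\F=0$. Write $L_k=P/Q$ with
\[
P=F_{k+1}G_{m-k}-CH,\qquad Q=F_{k+1}G_{m-k-1}+CH .
\]
Because $\{F_i,F_{k+1}\}_\F=\{F_i,C\}_\F=\{F_i,H\}_\F=0$, we get
\[
\{F_i,L_k\}_\F=\frac{F_{k+1}}{Q}\{F_i,G_{m-k}\}_\F-\frac{PF_{k+1}}{Q^2}\{F_i,G_{m-k-1}\}_\F .
\]
Hence it suffices to prove the key identity
\[
\{F_i,G_{m-k}\}_\F=L_k\,\{F_i,G_{m-k-1}\}_\F .
\]

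To prove this identity I will use \eqref{PBO} directly. Set $j=m-k$. Then $\kappa(i,j)=i-k-1$ and $\kappa(i,j-1)=i-k-2$. The hypotheses $i\geqslant m-1-l$ and $k\leqslant m-l-3$ give $i-k\geqslant 2$, so both indices satisfy $\kappa\geqslant 0$ and both brackets are given by the nontrivial branch of \eqref{PBO}. Write $\kappa=\kappa(i,j)$, and note that $i-\kappa=k+1$.

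Shifting the product index by one in the formula for $\{F_i,G_{j-1}\}_\F$, its numerator becomes $\prod_{l=1}^{\kappa}(F_{i-\kappa+l}G_{j-l}-CH)$. This is exactly the numerator of $\{F_i,G_j\}_\F$ with the $l=0$ factor $F_{k+1}G_{m-k}-CH=P$ removed. In the same way, the denominator of $\{F_i,G_{j-1}\}_\F$ becomes $\prod_{l=2}^{\kappa}(F_{i-\kappa+l-1}G_{j-l}+CH)$. This is the denominator of $\{F_i,G_j\}_\F$ with the $l=1$ factor $F_{k+1}G_{m-k-1}+CH=Q$ removed. The boundary case $\kappa(i,j-1)=0$ works because the empty products equal $1$. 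Together these give $\{F_i,G_j\}_\F=(P/Q)\{F_i,G_{j-1}\}_\F$, which is the key identity.

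Substituting the key identity into the expression for $\{F_i,L_k\}_\F$ gives
\[
\{F_i,G_{m-k-1}\}_\F\,F_{k+1}\left(\frac{P}{Q^2}-\frac{P}{Q^2}\right)=0,
\]
which completes the proof. The main obstacle is the index bookkeeping in \eqref{PBO}: one must check that the telescoping of the two products is exact and that the range $k\leqslant m-l-3$ is precisely what keeps both $\kappa$ values nonnegative. If $\kappa(i,m-k-1)$ were negative, the second bracket would vanish while the first would not, and the identity would fail.
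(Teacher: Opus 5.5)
Your proof is correct and follows essentially the same route as the paper: the chain rule reduces $\{F_i,L_k\}_\F$ to the brackets with $G_{m-k}$ and $G_{m-k-1}$, and the two terms then cancel after the one-step index shift in the products of \eqref{PBO}. Your version is somewhat more careful than the paper's. You isolate the identity $\{F_i,G_{m-k}\}_\F=L_k\{F_i,G_{m-k-1}\}_\F$, you check explicitly that $k\leqslant m-l-3$ keeps both $\kappa$ values nonnegative, and you justify passing from the generators to the whole field.
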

\begin{proof}
We prove that $\{F_i,L_k\}_\F=0$ for $m-1-l\leq i\leq m-1$ and $1\leq k \leq m-l-3$. The function $L_k$ depends on only two of the $G_j$, namely on $G_{m -k}$ and $G_{m -k-1}$. Hence, we find
\begin{align*}
\{F_i,L_k\}_\F&=\{F_i,G_{m -k}\}_\F\frac{\partial L_k}{\partial G_{m -k}} +
\{F_i,G_{m -k-1}\}_\F\frac{\partial L_k}{\partial G_{m -k-1}}\\
&=-\frac{\prod_{j=0}^{i-k-1} F_{k+j+1} G_{m -k-j} -CH}{\prod_{j=1}^{i-k-1} F_{k+j} G_{m -k-j} + CH}
\left(\frac{F_{k +1} }{\left(F_{k +1} G_{m-k-1} +C H\right)}\right)\\
&\ \ \ 
+\frac{\prod_{j=0}^{i-k-2} F_{k+j+2} G_{m-k-1-j} - CH}{\prod_{j=1}^{i-k-2}F_{k+j+1} G_{m -k-1-j} + CH}
\left(\frac{F_{k +1} \left(F_{k +1} G_{m-k} -C H\right)}{\left(F_{k +1} G_{m-k-1} +C H\right)^{2}}\right)\\
&=0,
\end{align*}
which can be seen by shifting the dummy variable $j\rightarrow j'=j+1$ in the second term, and including the factors with $j'=0$ in the numerator, and $j'=1$ in the denominator, in the products.  
\end{proof}
Using $\G_l$, we deform the Hamiltonian of the odd-dimensional Lotka-Volterra system \eqref{LV} as
\[
\widehat{H}=H+\alpha_1F_{m-1}+\alpha_2F_{m-2}+\cdots\alpha_{l+1}F_{m-1-l}.
\]
This yields a $2m-2$ dimensional Liouville integrable system. There are 2 Casimirs $H,C$, and $m-2$ integrals in involution $\widehat{H},F_2,\ldots,F_{m-2}$ (note that $F_{m-1}$ is not independent). We have $m-l-3$ additional integrals which make the system superintegrable, of ranks $l+1,l+2,\ldots,m-2$. The new algebra of integrals will be provided, and employed, in Section \ref{sec7}.

\begin{exam} \label{e15}
Let $n=11$, so that $m=6$. We consider the algebra of integrals $\F=\bbF(X)$, where $X=(H,F_2,F_3,F_4,F_{5},G_2,G_3,G_4,G_{5},C)$, and three different Abelian subalgebra of integralss $\G_0=\fu(H,F_5)$, $\G_1=\fu(H,F_4,F_5)$ and $\G_2=\fu(H,F_3,F_4,F_5)$.
\begin{itemize}
\item The subalgebra $\G_0=\fu(H,F_5)$ has commutants $(H,F_2,\ldots,F_5,L_1,L_2,L_3,C)$, with $L_k$ given by \eqref{L}. Taking as deformed Hamiltonian
$\widehat{H}=H+\alpha_1 F_5$, the system $\dot{X}=\{X,\widehat{H}\}_\F$ is given by
\begin{equation} \label{uio}
\begin{split}
\dot{H}&=\dot{F}_2=\dot{F}_3=\dot{F}_4=\dot{F}_5=\dot{C}=0\\
\dot{G}_2&=\alpha_1\left(F_{5} G_{2}-C H\right)\\
\dot{G}_3&=\alpha_1\frac{\left(F_{4} G_{3} -C H\right) \left(F_{5} G_{2}-C H \right)}{F_{4} G_{2}+C H }\\ 
\dot{G}_4&=\alpha_1\frac{\left( F_{3} G_{4}-C H\right) \left(F_{4} G_{3}-C H \right) \left(F_{5} G_{2}-C H \right)}{\left(F_{3} G_{3} +C H\right) \left(F_{4} G_{2}+C H \right)}\\
\dot{G}_5&=\alpha_1\frac{\left(F_{2} G_{5}-C H \right) \left(F_{3} G_{4}-C H \right) \left(F_{4} G_{3}-C H \right) \left(F_{5} G_{2}-C H \right)}{\left(F_{2} G_{4} +C H\right) \left(F_{3} G_{3} +C H\right) \left(F_{4} G_{2} +C H\right)}.
\end{split}
\end{equation}
This 10-dimensional system admits two sets of pairwise commuting integrals (including two Casimirs, $\{\widehat{H},F_2,F_3,F_4,H,C\}$ and $\{\widehat{H},L_1,L_2,L_3,H,C\}$.
We can show that the system does not admit other integrals than $\{F_2,F_3,F_4,F_5,L_1,L_2,L_3,H,C\}$ using the PDEs which constrain a commutant $K(X)$, namely $\{F_5,K(X)\}=0$, or, explicitly
\begin{align*}
\left(C H-F_5 G_2\right)K_{G_2}-&\frac{ \left(C H-F_5 G_2\right) \left(C H-F_4 G_3\right)}{F_4 G_2+C H}K_{G_3}+\frac{\left(C H-F_5 G_2\right) \left(C H-F_4 G_3\right) \left(C H-F_3 G_4\right)}{\left(F_4 G_2+C H\right) \left(F_3 G_3+C H\right)}K_{G_4}\\
-&\frac{\left(C H-F_5 G_2\right) \left(C H-F_4 G_3\right) \left(C H-F_3 G_4\right)}{\left(F_4 G_2+C H\right) \left(F_3 G_3+C H\right) \left(F_2 G_4+C H\right)}K_{G_5}=0,
\end{align*}
and the birational change of variables
\[   z_1=\frac{F_2 G_5-C H}{F_2 G_4+C H},z_2=\frac{F_3 G_4-C H}{F_3 G_3+C H},z_3=\frac{F_4 G_3-C H}{F_4 G_2+C H},z_4=G_2 , z_5=F_2, z_6=F_3, z_7=F_4, z_{8}=F_5. \]
In these variables the PDE reduces to
\[
(CH  -  z_4 z_8   )  K_{z_4}=0,
\]
which implies any commutant is a function of the known ones only. Corollary \ref{c19} states that the non-vanishing Poisson brackets between the integrals are as follows
\[
\{L_1,F_2\}_\F=L_1 F_2,\quad
\{L_2,F_3\}_\F=L_2 F_3,\quad
\{L_3,F_4\}_\F=L_3 F_4,\quad L_i\in\F.
\]  

Thus, the 10-dimensional system \eqref{uio} is superintegrable of rank 4 (Liouville-integrable), with commuting integrals
\[
(\widehat{H},F_2,F_3,F_4,H,C),\quad
(\widehat{H},L_1,L_2,L_3,H,C).
\]
It is superintegrable of rank 3 with tuples of integrals
\begin{align*}
&(\widehat{H},F_3,F_4,F_2,L_1,H,C),\quad
(\widehat{H},F_2,F_4,F_3,L_2,H,C),\quad
(\widehat{H},F_2,F_3,F_4,L_3,H,C),\\
&(\widehat{H},L_1,L_2,L_3,F_4,H,C),\quad
(\widehat{H},L_1,L_3,L_2,F_3,H,C),\quad
(\widehat{H},L_2,L_3,L_1,F_2,H,C).
\end{align*}
It is superintegrable of rank 2 with tuples of integrals
\begin{align*}
&(\widehat{H},F_4,F_2,F_3,L_1,L_2,H,C),\quad
(\widehat{H},F_3,F_2,F_3,L_1,L_3,H,C),\quad
(\widehat{H},F_2,F_3,F_4,L_2,L_3,H,C),\\
&(\widehat{H},L_1,L_2,L_3,F_3,F_4,H,C),\quad
(\widehat{H},L_2,L_1,L_3,F_2,F_4,H,C),\quad
(\widehat{H},L_3,L_1,L_2,F_2,F_3,H,C).
\end{align*}
And it is superintegrable of rank 1 with integrals $(\widehat{H},F_2,F_3,F_4,L_1,L_2,L_3,H,C)$. In terms of the initial variables the deformed Hamiltonian is
\[
\widehat{H}= \sum_{j=1}^{11}x_j + \alpha_1 \frac{x_{11}}{x_{10}} \sum_{j=1}^{9}x_j
\]
and the system is
\begin{equation} \label{xsy}
\dot{x}_i=x_i\cdot\begin{cases}
-\sum_{j=1}^{i-1}x_j+\sum_{j=i+1}^{11} x_j +\alpha_1  \frac{x_{11}}{x_{10}}\left(-\sum_{j=1}^{i-1}x_j+ \sum_{j=i+1}^{9} x_j\right) &i\leq 9,\\
-\sum_{j=1}^{i-1}x_j+\sum_{j=i+1}^{11} x_j &i=10,11.\\
\end{cases}
\end{equation}
It has the following $9=n-2$ integrals: $\widehat{H}$, $H$, $C$, $F_2$, $F_3$, $F_4$ and
\begin{align*}
K_1&=\frac{1}{L_1}=\frac{\left(x_{1}+x_{2}+x_{3}+x_{4}\right) x_{3}}{\left(x_{1}+x_{2}\right)x_{4 }},\\
K_2&=\frac{K_1}{L_2}=\frac{\left(x_{1}+x_{2}+x_{3}+x_{4}+x_{5}+x_{6}\right)x_{3} x_{5} }{\left(x_{1}+x_{2}\right)x_{4}x_{6}  },\\
K_3&=\frac{K_2}{L_3}=\frac{\left(x_{1}+x_{2}+x_{3}+x_{4}+x_{5}+x_{6}+x_{7}+x_{8}\right)x_{3} x_{5} x_{7} }{\left(x_{1}+x_{2}\right)x_{4} x_{6} x_{8} }.
\end{align*}
Both tuples $(\widehat{H},H,F_2,F_3,F_4,C)$ and $(\widehat{H},H,K_1,K_2,K_3,C)$ are pairwise commuting, the system is superintegrable of rank 5. Similar tuples as the ones mentioned above show the system is also superintegrable of ranks 4,3, and 2.
\item The subalgebra of integrals $\G_1=\fu(H,F_4,F_5)$  has commutants $(H,C,F_2,\ldots,F_5,L_1,L_2)$. The deformed Hamiltonian
$\widehat{H}=H+\alpha_1 F_5 + \alpha_2 F_4$ yields system \eqref{uio} with the following term added to its right hand side:
{\small \begin{equation} \label{at}
\left(0, \ldots , 0, \alpha_{2} \left(F_{4} G_{3}- C H \right), 
\frac{\alpha_{2} \left(C HF_{3} G_{4} -\right) \left(F_{4} G_{3}-C H \right)}{F_{3} G_{3} +C H}
, 
\frac{\alpha_{2} \left(F_{2} G_{5} -C H\right) \left(F_{3} G_{4}-C H\right) \left(F_{4} G_{3}-C H\right)}{\left(F_{2} G_{4}+C H\right) \left(F_{3} G_{3}+C H\right)}\right).
\end{equation}}
The system is  is superintegrable of rank 4 (Liouville-integrable), with commuting tuple
\[
(\widehat{H},F_2,F_3,F_4,H,C).
\]
It is superintegrable of rank 3 with tuples of integrals
\[
(\widehat{H},F_3,F_4,F_2,L_1,H,C),\quad
(\widehat{H},F_2,F_4,F_3,L_2,H,C).
\]
And it is superintegrable of rank 2 with integrals $(\widehat{H},F_4,F_2,F_3,L_1,L_2,H,C)$. Turning to the $x$-variables, we need to add the following terms to \eqref{xsy},
\begin{equation} \label{atx}
\alpha_2x_i\frac{x_9x_{11}}{x_8x_{10}}\cdot\begin{cases} -\sum_{j=1}^{i-1}x_j+\sum_{j=i+1}^{7} x_j & 1\leq i\leq 7,\\
0 & 8 \leq i \leq 11.\\
\end{cases}
\end{equation}
The resulting system admits the pairwise commuting tuple of integrals $(\widehat{H},H,F_2,F_3,F_4,C)$ as well as two additional integrals, $K_1,K_2$. It is superintegrable of ranks 5,4 and 3.
\item
Lastly, if take $\G_2=\fu(H,F_3,F_4,F_5)$ as the subalgebra of integrals, commutants are $(H,C,F_2,\ldots,F_5,L_1)$. The deformed Hamiltonian
$\widehat{H}=H+\alpha_1 F_5 + \alpha_2 F_4 + \alpha F_3$ yields system \eqref{uio} with \eqref{atx} added, as well as
\begin{equation} \label{att}
\left(0, \ldots , 0, \alpha_{3} \left(F_{3} G_{4} - CH\right), 
\frac{\alpha_{3} \left(F_{2} G_{5} - C H \right) \left(F_{3} G_{4}-C H \right)}{F_{2} G_{4} +C H}\right).
\end{equation}
The system is Liouville integrable, it admits the set of pairwise commuting integrals $\{\widehat{H}, H,F_2,F_3,F_4,C\}$ and has one additional integral, $L_1$.

The corresponding system in the $x$-variables is \eqref{xsy} with \eqref{atx}, and 
\begin{equation} \label{aatx}
\alpha_3x_i\frac{x_7x_9x_{11}}{x_6x_8x_{10}}\cdot \begin{cases} -\sum_{j=1}^{i-1}x_j+\sum_{j=i+1}^{5} x_j & 1\leq i\leq 5,\\
0 & 6 \leq i \leq 11.\\
\end{cases}
\end{equation}
The resulting system admits the pairwise commuting set of independent integrals $\{\widehat{H},H,F_2,F_3,F_4,C\}$ as well as an additional integral, $K_1$. It is superintegrable of ranks 5 and 4.
\end{itemize}
\end{exam}

\subsection{Deformation of the even $n=2m$ dimensional system \eqref{LV}} \label{eds}
For $1<m\in\mathbb{N}$, we consider the space $\F=\fu(X)$ with $X=(H,F_1,\ldots,F_{m-1},G_1,\ldots,G_{m-1})$, on which the non-trivial brackets are given by \eqref{PBE}. For $l\in\{0,1,\ldots,m-3\}$, we define the subalgebra of integrals
\[
\G_l=\fu(H,F_{m-1-l},\ldots,F_{m-1}).
\]
A function $J$ is a commutant of $\G_l$ if 
\begin{equation} \label{Jpde2}
\{F_i,J\}_\F = \sum_{j=1}^{m-1} \{F_i,G_j\}_\F J_{G_j} =0,\quad i=m-1-l,\ldots,m-1.
\end{equation}
\begin{prop}
The system of PDEs \eqref{Jpde2} is satisfied by
\begin{equation} \label{J}
J_k = F_{m-1-l} G_{1+k}-F_{m-1-k} G_{1+l}
\end{equation}
for each $k=l+1,\ldots,m-2$.
\end{prop}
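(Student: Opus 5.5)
The plan is a direct verification: I will show that $\{F_i,J_k\}_\F=0$ for every $i=m-1-l,\ldots,m-1$ and every $k=l+1,\ldots,m-2$, using only the brackets \eqref{PBC} and \eqref{PBE}.

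Write $a:=m-1-l$, so that $J_k=F_aG_{1+k}-F_{m-1-k}G_{1+l}$. The $F$'s pairwise commute by \eqref{PBC}. The Leibniz rule therefore gives
\[
\{F_i,J_k\}_\F=F_a\{F_i,G_{1+k}\}_\F-F_{m-1-k}\{F_i,G_{1+l}\}_\F .
\]
So the task is to evaluate the two brackets $\{F_i,G_{1+k}\}_\F$ and $\{F_i,G_{1+l}\}_\F$ via \eqref{PBE}. This requires knowing the sign of $\kappa$ from \eqref{kij} in each case.

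Parametrise $i=m-1-p$ with $0\leqslant p\leqslant l$.

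\emph{First bracket.} For the pair $(i,1+k)$ we get $\kappa=i+1+k-m-1=k-p-1$. Since $k\geqslant l+1\geqslant p+1$, this gives $\kappa\geqslant 0$. Hence \eqref{PBE} yields $\{F_i,G_{1+k}\}_\F=-F_{m-1-k}G_{m-i}=-F_{m-1-k}G_{p+1}$.

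\emph{Second bracket.} For the pair $(i,1+l)$ we get $\kappa=l-p-1$, which needs a case split; this is the only delicate point.
\begin{itemize}
\item If $p<l$, then $\kappa\geqslant0$ and $\{F_i,G_{1+l}\}_\F=-F_{m-1-l}G_{m-i}=-F_aG_{p+1}$.
\item If $p=l$, then $\kappa=-1<0$ and $\{F_i,G_{1+l}\}_\F=-F_iG_{1+l}$. Here $i=a$ and $1+l=p+1$, so this is again $-F_aG_{p+1}$.
\end{itemize}
Thus both branches of \eqref{PBE} give the same expression, $-F_aG_{p+1}$.

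Substituting the two brackets into the Leibniz expansion gives
\[
\{F_i,J_k\}_\F=-F_aF_{m-1-k}G_{p+1}+F_{m-1-k}F_aG_{p+1}=0,
\]
which proves the proposition.

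I also need that every index stays in the range where \eqref{PBE} applies, i.e.\ $1\leqslant i,\,1+k,\,1+l,\,m-1-k\leqslant m-1$. This holds by the stated ranges $l\leqslant m-3$ and $l+1\leqslant k\leqslant m-2$. I expect no step to be a real obstacle; the main care is in the boundary case $p=l$, where one must notice that the $\kappa<0$ branch of \eqref{PBE} coincides with the $\kappa\geqslant0$ formula.
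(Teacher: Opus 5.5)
Your proof is correct and is essentially the paper's own argument. Like the paper, you expand $\{F_i,J_k\}_\F$ by the Leibniz rule, evaluate $\{F_i,G_{1+k}\}_\F$ using the $\kappa\geq 0$ branch of \eqref{PBE}, and observe that $\{F_i,G_{1+l}\}_\F=-F_{m-1-l}G_{m-i}$ holds both when $\kappa\geq 0$ and in the boundary case $i=m-1-l$ (where $\kappa=-1$), so the two terms cancel.
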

\begin{proof}
We have, for $i=m-1-l,\ldots,m-1$ and $k=l+1,\ldots,m-2$,
\begin{align*}
\{F_i,J_k\}_\F&=-F_{m-1-k} \{F_i,G_{1+l}\}_\F + F_{m-1-l}  \{F_i,G_{1+k}\}_\F\\
&=F_{m-1-l}F_{m-k-1}G_{m-i}-F_{m-1-l}F_{m-k-1}G_{m-i}=0,
\end{align*}
as $\kappa(i,1+k)\geq0$ and either $\kappa(i,1+l)\geq 0$ or, when $i=m-1-l$, $\kappa(i,1+l)=-1$. 
\end{proof}
We use $\G_l$ to deform the Hamiltonian of the even-dimensional Lotka-Volterra system \eqref{LV},
\begin{equation}
\widehat{H}=H+\alpha_1F_{r-1}+\alpha_2F_{r-2}+\cdots\alpha_{l+1}F_{r-1-l}.
\end{equation}
This yields a $2r-1$ dimensional Liouville integrable and superintegrable system. We have $r-1$ integrals in involution, $\widehat{H},F_1,\ldots,F_{r-2}$ and a Casimir $H$, and $r-2-l$ additional integrals, $J_{l+1},\ldots,J_{r-2}$ (which are also pairwise commuting). The system is also superintegrable of rank $l+1$. 

\begin{exam}
Let $n=10$, so that $m=5$. We consider the 9-dimensional algebra of integrals $\F=\fu(X)$ with $X=(H,F_1,F_2,F_3,F_4,G_1,G_2,G_3,G_4)$, and three different Abelian subalgebra of integralss $\G_0=\fu(H,F_4)$, $\G_1=\fu(H,F_3,F_4)$ and $\G_2=\fu(H,F_2,F_3,F_4)$.
\begin{itemize}
\item The subalgebra $\G_0=\fu(H,F_4)$ has commutants $(H,F_1,F_2,F_3,F_4,J_1,J_2,J_3)$,
where
\[
J_{1} = F_{4} G_{2}-F_{3} G_{1},\quad
J_{2} = F_{4} G_{3}-F_{2} G_{1},\quad
J_{3} = F_{4} G_{4}-F_{1} G_{1}.
\]
There are no other commutants. This follows from writing the PDE
\[
\{F_4,K(X)\}_\F=-G_1 (F_4K_{G_1}+F_3K_{G_2}+F_2K_{G_3}+F_1K_{G_4}) =0
\]
in terms of the variables
\[
z_1=G_1, z_2=J_1, z_3=J_2, z_4=J_3, z_5=F_1, z_6=F_2, z_7=F_3, z_8=F_4.
\]
The inverse of this transformation is
\[
F_{1} = z_{5}, F_{2} = z_{6}, F_{3} = z_{7}, F_{4} = z_{8}, 
G_{1} = z_{1}, G_{2} = \frac{z_{1} z_{7}+z_{2}}{z_{8}}, G_{3} = 
\frac{z_{1} z_{6}+z_{3}}{z_{8}}, G_{4} = 
\frac{z_{1} z_{5}+z_{4}}{z_{8}},
\]
and the PDE reduces to $z_1 z_8 K_{z_1} =0$.
Taking as deformed Hamiltonian
$\widehat{H}=H+\alpha_1 F_4$, the system $\dot{X}=\{X,\widehat{H}\}_\F$ is given by
\begin{equation} \label{uie}
\begin{split}
\dot{H}&=0\\
\dot{F}_i&=0,\quad i=1,\ldots 4\\
\dot{G}_i&=\alpha_1 G_1 F_{5-i},\quad i=1,\ldots 4.
\end{split}
\end{equation}
The 9-dimensional system \eqref{uie} is superintegrable of rank 4 with tuples of commuting independent integrals, $(\widehat{H}, F_1,F_2,F_3, H)$ and $(\widehat{H}, J_1,J_2,J_3, H)$. The  system \eqref{uie} is also superintegrable of rank 1 as it admits 8 independent integrals $(\widehat{H},F_1,F_2,F_3,J_1,J_2,J_3,H)$.

In the original coordinates, we find
\begin{equation} \label{xsy2}
\dot{x}_i=x_{i}\cdot\begin{cases}
-\sum_{j=1}^{i-1}x_j+\sum_{j=i+1}^{10} x_j +\alpha_1  \frac{x_{10}}{x_{9}}\left(-\sum_{j=1}^{i-1}x_j+ \sum_{j=i+1}^{8} x_j\right) &i\leq 8,\\
-\sum_{j=1}^{i-1}x_j+\sum_{j=i+1}^{10} x_j  &i=9,10\\
\end{cases}
\end{equation}
which is superintegrable of ranks 5 and 2.
\item The subalgebra $\G_1=\fu(H,F_3,F_4)$ has commutants $(H,F_1,F_2,F_3,F_4,J_2,J_3)$, where
\[
J_{2} = F_{3} G_{3} - F_{2} G_{2},\quad
J_{3} = F_{3} G_{4} - F_{1} G_{2}.
\] Taking as deformed Hamiltonian
$\widehat{H}=H+\alpha_1 F_4 + \alpha_2 F_3$, the system $\dot{X}=\{X,\widehat{H}\}_\F$ is given by
\begin{equation} \label{uie2}
\begin{split}
\dot{H}&=0\\
\dot{F}_i&=0,\quad i=1,\ldots 4\\
\dot{G}_1&=G_{1} \left(\alpha_{1} F_{4}+\alpha_{2} F_{3}\right)\\
\dot{G}_i&=F_{5-i} \left(\alpha_{1}G_{1} +\alpha_{2}G_{2} \right),\quad i=2,3,4.
\end{split}
\end{equation}
System \eqref{uie2} is superintegrable of rank 4 with commuting independent integrals
$(\widehat{H}, F_1,F_2,F_3,H)$. It is superintegrable of rank 2 with integrals
$(\widehat{H}, F_3,F_2,F_1,J_2,J_3,H)$.  

The corresponding $x$-system is \eqref{xsy2}, with the following terms added to the right-hand-side:
\begin{equation} \label{atx2}
\alpha_2x_i\frac{x_8x_{10}}{x_7x_{9}}\begin{cases} -\sum_{j=1}^{i-1}x_j+\sum_{j=i+1}^{6} x_j & 1\leq i\leq 6,\\
0 & 7 \leq i \leq 10.\\
\end{cases}
\end{equation}
The system \eqref{atx2} is superintegrable of ranks 5 and 3.
\item The subalgebra $\G_2=\fu(H,F_2,F_3,F_4)$ has commutant $\{H,F_1,F_2,F_3,F_4,J_3\}$, where
\[
J_{3} = F_{2} G_{4}-F_{1} G_{3}.
\]
Taking as deformed Hamiltonian
$\widehat{H}=H+\alpha_1 F_4 + \alpha_2 F_3 + \alpha_3 F_2 $, the system $\dot{X}=\{X,\widehat{H}\}_\F$ is given by
\begin{equation} \label{uie3}
\begin{split}
\dot{H}&=0\\
\dot{F}_i&=0,\quad i=1,\ldots 4\\
\dot{G}_1&=G_{1} \left(\alpha_{1} F_{4}+\alpha_{2} F_{3}+\alpha_{3} F_{2}\right)\\
\dot{G_2}&=\alpha_{1} F_{3} G_{1}+\alpha_{2} F_{3} G_{2} + \alpha_{3} F_{2} G_{2}\\
\dot{G}_i&=F_{5-i} \left(\alpha_{1}G_{1} +\alpha_{2}G_{2}+\alpha_{3}G_{3} \right),\quad i=3,4.
\end{split}
\end{equation}
System \eqref{uie3} is superintegrable of rank 4 with commuting independent integrals
$(\widehat{H}, F_1,F_2,F_3,H)$. It is superintegrable of rank 3 with integrals
$(\widehat{H}, F_3,F_2,F_1,J_3,H)$. The $x$-system, \eqref{xsy2} with \eqref{atx2} as well as the following terms added to the right-hand-side:
\begin{equation}
\alpha_3x_i\frac{x_6x_8x_{10}}{x_5x_7x_{9}}\cdot\begin{cases} -\sum_{j=1}^{i-1}x_j+\sum_{j=i+1}^{4} x_j & 1\leq i\leq 4,\\
0 & 5 \leq i \leq 10,\\
\end{cases}
\end{equation}
is superintegrable of ranks 5 and 4.
\end{itemize}
\end{exam}

\section{Integrable deformations of Lotka-Volterra system \eqref{LV}, using a multi-parameter one-dimensional subalgebra of integrals} \label{sec6}

In Sections \ref{sec4} and \ref{sec5}, the commutant was taken relative to subalgebra of integralss that were non-Abelian and Abelian. In this section, we will consider another case where the subalgebra of integrals is a linear combination of the generators of the algebra of integrals. For even $n$, the general linear combination is
\begin{equation}
  L=\sum_{i=1}^{m-1} \alpha_i F_i +  \sum_{i=1}^{m-1} \beta_i G_i, 
\end{equation}  
and for odd $n$ it is
\begin{equation}
  L= \sum_{i=2}^{m-1} \alpha_i F_i +  \sum_{i=2}^{m-1} \beta_i G_i. 
\end{equation}  
where we have excluded the terms with $F_1=G_1=C$. As we will see, this choice will provide us with integrals beyond polynomials and rational functions.

\subsection{Odd case (linear combination of commuting functions)}
We consider the space $\F=\fu(X)$, where $X=(H,F_2,\ldots, F_{m-1}, G_2, \ldots,G_{m-1},C)$ and the nontrivial brackets are given by \eqref{PBO}. As one-dimensional subalgebra of integrals we take $\G=\fu(F_2 + a F_3)$, whose commutants $K(X)$ satisfies the PDE
\begin{align*}
  0&=\{K,  F_2 + a F_3\}_\F\\
  &=  - a  \left( C H-F_3 G_{m-2}\right) K_{G_{m-2}}
  +\left(a \frac{CH-F_3 G_{m-2}}{F_2 G_{m-2}+ C H} -1\right)\left(C H-F_2 G_{m-1}\right)K_{G_{m-1}}.
\end{align*}  
which admits the following functionally independent solutions:
$(H,C,F_2,\ldots,F_{m-1},G_2,\ldots,G_{m-3},K)$, where
\begin{equation}
 K=\frac{\left(F_2 G_{m-1}-CH\right) \left(F_3 G_{m-2}-CH\right)^{-F_2/(a F_3)}}{F_2 \left(F_2 G_{m-2}+ C H\right)} 
\end{equation} 
is a transcendental function. Note that the functions $G_{m-2}$ and $G_{m-1}$ are no longer integrals of any deformed Hamiltonian
\begin{equation}
 \widehat{H}= \widehat{H}(H,F_2 + a F_3)  
\end{equation} 
The non-vanishing Poisson brackets between the new integral $K$ and the integrals of type $F,G$ are
\begin{equation}
  \{K,F_2\}_\F= K F_2  ,\quad  \{K,F_s\}_\F= -\frac{F_2K}{a} \prod_{i=3}^{s-1}  \frac{F_{i+1} G_{m-i}-CH}{F_i G_{m-i}+ C H} , \quad s \geq 3.
\end{equation}  

\subsection{Even case (linear combination of commuting functions)}
We consider the space $\F=\fu(X)$, where $X=(H,F_1,\ldots,F_{m-1},G_1, \ldots, G_{m-1})$ and the nontrivial brackets are given by \eqref{PBE}, with subalgebra of integrals $\fu(F_1 + a  F_2)$. We determine the functions $J(X)$ that satisfy
\[\{J, F_1 + a  F_2\}_\F= \sum_{i=1}^{m-2} G_i  (F_1 + a F_2 ) J_{G_i}+  (F_1 G_{m-1} + a F_2 G_{m-2} )J_{G_{m-1}}=0.
\]
Functionally independent solutions are given by 
$(H,F_1,\ldots,F_{m-1},\widetilde{G}_2,\ldots,\widetilde{G}_{m-2},J)$, where
\[
\widetilde{G}_i=\dfrac{G_i}{G_1},\quad J=\left(G_{m-1}-\frac{F_1 G_{m-2}}{F_2}\right)
G_1^{-F_1/(a F_2+F_1)}.
\]
We observe the $G_i$ are no longer integrals of the deformed Hamiltonian, but we obtain integrals $\widetilde{G}_2,\ldots,\widetilde{G}_{m-2}$ which allows the deformation to be superintegrable. The Poisson brackets are $\{\widetilde{G}_i,J\}_\F=0$, and 
\[
\{F_i,\widetilde{G}_j\}_\F=\begin{cases}
      F_{i}\widetilde{G}_{j}-F_{m-j}\widetilde{G}_{m-i}, &i+j>m,\\
      0, &i+j\leq m,
  \end{cases}\quad
  \{F_i, J\}_\F= \begin{cases}
      -a F_{1} F_{2}J/(F_{1}+a F_{2}), &i=1,\\
      F_{1} F_{i}J/(F_{1}+a F_{2}), &i>1.
  \end{cases}
\]

\subsection{Odd Case (linear combination of noncommuting functions)}
We consider the space $\F=\fu(X)$, where $X=(H,F_2,\ldots, F_{m-1}, G_2, \ldots,G_{m-1},C)$ and the nontrivial brackets are given by \eqref{PBO}. As one-dimensional subalgebra of integrals we take $\G=\fu(F_2 + a F_3 + b G_{m-2} )$ and look for commutants. The corresponding PDE admits the following pairwise commuting sets of solutions:
\[
\widetilde{F}_i= \begin{cases}
    F_2  &  i=2 \\
    a F_3 + b G_{m-2}  & i=3 \\[1mm]
    \dfrac{F_i G_{m+1-i}-CH}{F_{i-1} G_{m+1-i}+CH} & i=4,\ldots, m-3,
\end{cases}
\quad
\widetilde{G}_i= \begin{cases}
    G_i  &  i=2,\ldots,m-3 \\[1mm]
    \dfrac{C H -F_{2} G_{m -1}}{bF_{2} \left(C H +F_{2} G_{m -2}\right)} \mathrm{e}^Q & i=m-2, 
\end{cases}
\]
where
\[
Q=-\frac{2 F_{2}}{P} \arctan\left(\frac{a F_{3}-b G_{m -2}}{P}\right),\quad P=\sqrt{4 a b C H  -\left(a F_{3}+b G_{m-2}\right)^{2}}.
\]
The non-vanishing Poisson brackets are:
\[
\{\widetilde{F}_i,\widetilde{G}_j\}_\F= \begin{cases}
-\widetilde{F}_i \widetilde{G}_j & i=2, j=m-2 \text{ or } i>3, j=m+1-i, \\
\widetilde{F}_{i-1}\widetilde{G}_{j} & i=3, j=m-2.
\end{cases}
\]

\subsection{Even case (linear combination of noncommuting functions)}
We consider the space $\F=\fu(X)$, where $X=(H,F_1,\ldots,F_{m-1},G_1, \ldots, G_{m-1})$ and the nontrivial brackets are given by \eqref{PBE}, with subalgebra of integrals $\fu(F_1 + a  F_2 + b G_1)$. Pairwise commuting sets of commutants are given by
\[
\widetilde{F}_i= \begin{cases}
F_1 + a  F_2 + b G_1 & i=1, \\
    F_i/F_1  &  i=2,..,m-1,
\end{cases}
\quad
\widetilde{G}_i= \begin{cases}
\left(G_{m-1}-\frac{F_1}{F_2}G_{m-2}\right)(-bG_1)^{-F_1/(aF_2 + F_1)} 
& i=1, \\
    G_i/G_1 &  i=2,\ldots,m-2.
\end{cases}
\] 
The non-vanishing Poisson brackets between these functions are given by
\[
\{\widetilde{F}_i,\widetilde{G}_j\}_\F = \begin{cases}
\widetilde{F}_i\widetilde{G}_j, &i=2,\ldots,m-1,\ j=1,\\
(F_iG_j-F_{m-j}G_{m-i})/F_1/G_1 &i>m-j,\ j>1.\\
\end{cases}
\]

\section{Iterative application of the method} \label{sec7}
We already mentioned the possibility of applying iteratively the construction, i.e. once a new algebra of integrals and related deformed Hamiltonian are obtained, they can then be used to further deform the system while retaining superintegrability. We provide the algebra of integrals for the commutants found in Section \ref{ods}, and illustrate how to use them to deform the deformed systems.

\smallskip
We consider the space $\F=\fu(H,F_2,\ldots,F_{m-1},G_2,\ldots,G_{m-1},C)$, where the nontrivial brackets are given by \eqref{PBO}. 
\begin{prop}
The non-vanishing Poisson brackets between the commutants $\{F_2,\ldots,F_{m-1},L_1,\ldots,L_{m-l-3}\}$, see Prop. \ref{p14} and eq. \eqref{L}, are
\[
\{F_{j+1},L_j\}_\F=-F_{j+1} L_{j},\quad j=1,\ldots,m-3.
\]
\end{prop}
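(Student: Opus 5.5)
We will verify the three types of brackets among the commutants $F_2,\ldots,F_{m-1},L_1,\ldots,L_{m-l-3}$ separately. The brackets $\{F_i,F_j\}_\F$ vanish by \eqref{PBC}, so only $\{F_i,L_k\}_\F$ and $\{L_j,L_k\}_\F$ need work. The basic tool is the chain rule \eqref{compdes}. Since $L_k$ in \eqref{L} depends only on $F_{k+1}$, $G_{m-k}$, $G_{m-k-1}$ (and the Casimirs $C,H$), and the $F$'s mutually commute, we have
\[
\{F_i,L_k\}_\F=\{F_i,G_{m-k}\}_\F\,\frac{\partial L_k}{\partial G_{m-k}}+\{F_i,G_{m-k-1}\}_\F\,\frac{\partial L_k}{\partial G_{m-k-1}} .
\]

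\textbf{Step 1: $\{F_i,L_k\}_\F$ for all $2\le i\le m-1$.} By \eqref{kij}, $\kappa(i,m-k)=i-k-1$ and $\kappa(i,m-k-1)=i-k-2$. We split into three cases.
\begin{itemize}
\item If $i\le k$, both $\kappa$'s are negative, so by \eqref{PBO} both brackets vanish and $\{F_i,L_k\}_\F=0$.
\item If $i\ge k+2$, both $\kappa$'s are nonnegative. The telescoping computation in the proof of Proposition \ref{p14} applies verbatim, since it used only $i\ge k+2$ and not the restriction $i\ge m-1-l$. Hence $\{F_i,L_k\}_\F=0$.
\item If $i=k+1$, only the first term survives, with $\kappa=0$. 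Then \eqref{PBO} gives $\{F_{k+1},G_{m-k}\}_\F=-(F_{k+1}G_{m-k}-CH)$. Multiplying by $\partial L_k/\partial G_{m-k}=F_{k+1}/(F_{k+1}G_{m-k-1}+CH)$ yields $-F_{k+1}L_k$.
\end{itemize}
This gives exactly $\{F_{j+1},L_j\}_\F=-F_{j+1}L_j$ and no other nonzero $F$--$L$ brackets.

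\textbf{Step 2: $\{L_j,L_k\}_\F=0$.} Here I would avoid expanding in all variables. Each $L$ depends on exactly one $F$, and the $G$'s mutually commute by \eqref{PBC}. Hence, expanding $\{L_j,L_k\}_\F$ by the Leibniz rule, the $G$--$G$ and $F$--$F$ contributions drop, and the mixed terms regroup as
\[
\{L_j,L_k\}_\F=\frac{\partial L_j}{\partial F_{j+1}}\{F_{j+1},L_k\}_\F-\frac{\partial L_k}{\partial F_{k+1}}\{F_{k+1},L_j\}_\F .
\]
The point is that $\sum_b \frac{\partial L_j}{\partial G_b}\{G_b,F_{k+1}\}_\F$ is precisely $-\{F_{k+1},L_j\}_\F$ as computed in Step 1. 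For $j\neq k$, Step 1 shows both brackets on the right vanish, since $j+1\neq k+1$. For $j=k$ the bracket is trivially zero. I expect this regrouping to be the main point to get right, in particular the signs and the fact that the $G_b$-derivative sum reproduces the Step 1 bracket. Once it is written this way, the result follows immediately from Step 1.

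Finally, the Casimirs $H,C$ commute with everything, so the list of nonzero brackets is complete.
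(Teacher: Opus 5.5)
Your proof is correct and takes essentially the paper's approach. For $\{F_i,L_k\}_\F$ you use the same three-way split on $\kappa(i,m-k)=i-k-1$ and $\kappa(i,m-k-1)=i-k-2$, with the case $i\ge k+2$ handled by the telescoping of Proposition~\ref{p14}. Your identity $\{L_j,L_k\}_\F=\frac{\partial L_j}{\partial F_{j+1}}\{F_{j+1},L_k\}_\F-\frac{\partial L_k}{\partial F_{k+1}}\{F_{k+1},L_j\}_\F$ is a tidier packaging of the paper's four-term expansion: it makes explicit that the surviving terms are exactly the Step~1 brackets.
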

\begin{proof}
We first consider
\begin{align*}
\{F_i,L_j\}_\F &= \left\{F_i,\frac{F_{j +1} G_{m -j}-C H }{F_{j +1} G_{m-j-1}+CH}\right\}_\F\\
&= \frac{ F_{j+1} }{ ( F_{j+1} G_{m-j-1} + CH ) }\{F_i,G_{m-j}\}_\F + \frac{( F_{j+1} G_{m-j} - CH  )F_{j+1}}{( F_{j+1} G_{m-j-1} + CH )^2} \{F_i , G_{m-j-1} \}_\F.
\end{align*}
Based on the value of \eqref{kij}, there are three cases: in both Poisson brackets $\kappa <0$, both $\kappa \geq 0$ or one of the Poisson brackets is $\kappa <0$ and the other is $\kappa \geq 0$. The first two cases give directly $0$. The third case occurs only when $i=j+1$, and then
\[
\{F_{j+1},L_j \}_\F = \frac{ F_{j+1} }{F_{j+1} G_{m-j-1} + CH  }\{F_{j+1} , G_{m-j} \}_\F  = F_{j+1} \frac{  CH - F_{j+1} G_{m-j} }{F_{j+1} G_{m-j-1} + CH} = - F_{j+1} L_j.  \]
Next, we consider
\begin{align*}
 \{L_i,L_j\}_\F&= \left\{ \frac{F_{i +1} G_{m -i}-C H }{F_{i +1} G_{m-i-1}+CH},  \frac{F_{j +1} G_{m -j}-C H }{F_{j +1} G_{m-j-1}+CH}\right\}_\F\\
 &= \frac{\partial L_i}{\partial F_{i+1}} \frac{\partial L_j}{\partial G_{r-j}} \{F_{i+1},G_{r-j}\}_\F  +   \frac{\partial L_i}{\partial F_{i+1}} \frac{\partial L_j}{\partial G_{r-j-1}} \{F_{i+1},G_{r-j-1}\}_\F\\
 &\ \ \ +   \frac{\partial L_i}{\partial G_{r-i}} \frac{\partial L_j}{\partial F_{j+1}} \{G_{r-i},F_{j+1}\}_\F +  \frac{\partial L_i}{\partial G_{r-i-1}} \frac{\partial L_j}{\partial F_{j+1}} \{G_{r-i-1},F_{j+1}\}_\F.   
\end{align*}
The above brackets have $\kappa=i-j$, $\kappa=i-j-1$, $\kappa=j-i$ and $\kappa=j-i-1$. Assume, without loss of generality, that $i<j$. Then the first two terms vanish, and we are left with
\[
   \{L_i,L_j\}_\F=\frac{\partial L_j}{\partial F_{j+1}}\left(\frac{\partial L_i}{\partial G_{r-i}}  \{G_{r-i},F_{j+1}\} +  \frac{\partial L_i}{\partial G_{r-i-1}}  \{G_{r-i-1},F_{j+1}\}\right)=0,
\]
due to
\[
\frac{\partial L_i}{\partial G_{r-i-1}} = \frac{ F_{i+1} G_{r-i} -CH}{(F_{i+1}G_{r-i-1}+CH)^2} F_{i+1},\quad \frac{\partial L_i}{\partial G_{r-i}}= \frac{ F_{i+1}}{F_{i+1} G_{r-i-1}+CH}
\]
and the definition of the bracket \eqref{PBO}.
\end{proof}

\begin{cor} \label{c19}
  The Poisson algebra of integrals of the integrals
\[
(\widehat{H}=H+\sum_{i=0}^{l} \alpha_{i+1}F_{5-i},F_2,F_3,F_4,L_1,\ldots,L_{3-l},H,C)
\]
in Example \ref{e15}, with $l=0$, has non-vanishing brackets
\begin{equation}\label{LF}
\{L_1,F_2\}_\F=L_1 F_2,\quad
\{L_2,F_3\}_\F=L_2 F_3,\quad
\{L_3,F_4\}_\F=L_3 F_4,
\end{equation}
where the $L_i$ are functions in $\F$.
\end{cor}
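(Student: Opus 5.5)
This is the case $m=6$, $l=0$ of the preceding Proposition, plus the brackets that involve $\widehat{H}$, $H$ and $C$. The plan is to assemble it from results already established rather than compute afresh.

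First, the brackets among $F_2,F_3,F_4,L_1,L_2,L_3$. By \eqref{PBC} we have $\{F_i,F_j\}_\F=0$. The preceding Proposition, specialised to $m=6$, gives $\{F_{j+1},L_j\}_\F=-F_{j+1}L_j$ for $j=1,2,3$; by antisymmetry this is exactly \eqref{LF}. Its proof also shows two further facts:
\begin{itemize}
\item $\{F_i,L_j\}_\F=0$ for $i\neq j+1$, since the third case of the $\kappa$-analysis occurs only when $i=j+1$;
\item $\{L_i,L_j\}_\F=0$ for all $i,j$.
\end{itemize}
For the first fact I would double-check the boundary indices at $m=6$. The bracket $\{F_i,G_{m-j}\}_\F$ has $\kappa=i-j-1$ and $\{F_i,G_{m-j-1}\}_\F$ has $\kappa=i-j-2$, so the mixed sign case forces $i=j+1$. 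When both $\kappa\geq 0$, cancellation follows from the telescoping argument in the proof of Proposition \ref{p14}. I expect this telescoping step to be the only real obstacle, namely confirming that the product shift still works for every $i\geq j+2$ with $i\leq 5$ and $j\leq 3$.

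Second, the Casimirs. $H$ and $C$ are Casimirs of $\F$ by \eqref{PBC} and Proposition \ref{PAs}, so every bracket involving them vanishes.

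Third, the deformed Hamiltonian. For $l=0$, $\widehat{H}=H+\alpha_1F_5$. Its bracket with $F_2,F_3,F_4$ vanishes by \eqref{PBC}. Its bracket with each $L_k$ vanishes by Proposition \ref{p14}, which applies because $L_1,L_2,L_3$ lie in the commutant of $\G_0=\fu(H,F_5)$; for $l=0$ the index range there is $k\le m-3=3$. Hence $\widehat{H}$ is a Casimir of the algebra of integrals. The only remaining nonzero brackets are therefore those in \eqref{LF}.

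Finally, each $L_i$ is a rational function of the generators of $\F$, so it belongs to $\F$. By Lemma \ref{LPB}, the bracket restricted to the functionally independent tuple $(\widehat{H},F_2,F_3,F_4,L_1,L_2,L_3,H,C)$ is a well-defined Poisson bracket. This completes the description of the algebra.
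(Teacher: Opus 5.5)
Your proposal is correct and is essentially the paper's argument: the corollary is the $m=6$ specialisation of the preceding Proposition. There the mixed-sign $\kappa$ case occurs only for $i=j+1$, and the both-$\kappa\geq 0$ case is exactly the telescoping cancellation of Proposition~\ref{p14}; taking $l=m-1-i$ there covers every pair with $i\geq j+2$, which settles your boundary-index worry, and your observations that $H$, $C$ are Casimirs and that $\widehat{H}=H+\alpha_1F_5$ commutes with $F_2,F_3,F_4$ by \eqref{PBC} and with $L_1,L_2,L_3$ by Proposition~\ref{p14} (with $l=0$) complete the list just as intended.
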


In the next example, we illustrate how to deform a deformed system, considering the case $l=0$ in Example \ref{e15}.

\begin{exam}
We consider the space ${\cal L}=\fu(L)$, where $L=(\widehat{H},F_2,F_3,F_4,L_1,L_2,L_{3},H,C)$, equipped with a Poisson bracket defined by eq. \eqref{LF}, i.e.
\[
\{L_1,F_2\}_{\cal L}=L_1 F_2,\quad
\{L_2,F_3\}_{\cal L}=L_2 F_3,\quad
\{L_3,F_4\}_{\cal L}=L_3 F_4.
\]
The subalgebra of integrals generated by $g=aF_2+bL_2$ has independent commutants
\[
\widehat{\widehat{H}}=\widehat{H}+g,F_2,F_4,L_3,Q=L_1F_3^{aF_2/(bL_2)},\widehat{H},H,
\]
which are solutions to the PDE $\{g,Q\}_{\cal Z}=0$. The tuplets
\[
(\widehat{\widehat{H}},F_2,F_4,\widehat{H},H,C),\quad (\widehat{\widehat{H}},F_2,L_3,\widehat{H},H,C),\quad (\widehat{\widehat{H}},F_4,Q,\widehat{H},H,C),\quad (\widehat{\widehat{H}},L_3,Q,\widehat{H},H,C)
\]
are superintegrable of rank 3 (Liouville integrable). The tuplets
\[
(\widehat{\widehat{H}},F_2,F_4,L_3,\widehat{H},H,C),\quad (\widehat{\widehat{H}},F_4,F_2,Q,\widehat{H},H,C),\quad (\widehat{\widehat{H}},L_3,F_2,Q,\widehat{H},H,C),\quad (\widehat{\widehat{H}},Q,F_4,L_3,\widehat{H},H,C)
\]
are superintegrable of rank 2. And the tuplet
\[
(\widehat{\widehat{H}},F_2,F_4,L_3,Q,\widehat{H},H,C)
\]
is superintegrable of rank 1. In terms of the original $x$-variables, the system is \eqref{xsy} with added terms to the right hand side
\begin{equation} \label{ata}
a x_i\frac{x_{5} x_{7} x_{9} x_{11}}{x_{4} x_{6} x_{8} x_{10}}\cdot\begin{cases} -\sum_{j=1}^{i-1}x_j+\sum_{j=i+1}^{3} x_j & 1\leq i\leq 3,\\
0 & 4 \leq i \leq 11.\\
\end{cases}
\end{equation}
and
\begin{equation} \label{atb}
\frac{bx_{6} }{x_{5} \left(x_{1}+x_{2}+x_{3}+x_{4}+x_{5}+x_{6}\right)^{2}}\cdot\begin{cases}
-x_{1}^{2} \left(x_{5}+x_{6}\right)&i=1,\\ 
-x_{2} \left(x_{5}+x_{6}\right) \left(2 x_{1}+x_{2}\right)&i=2,\\
-x_{3} \left(x_{5}+x_{6}\right) \left(2 x_{1}+2 x_{2}+x_{3}\right)&i=3,\\ 
-x_{4} \left(x_{5}+x_{6}\right) \left(2 x_{1}+2 x_{2}+2 x_{3}+x_{4}\right)
&i=4,\\ 
x_{5} \left(x_{1}+x_{2}+x_{3}+x_{4}\right) \left(x_{1}+x_{2}+x_{3}+x_{4}-x_{6}\right)
&i=5,\\
x_{6} \left(x_{1}+x_{2}+x_{3}+x_{4}\right) \left(x_{1}+x_{2}+x_{3}+x_{4}+x_{5}\right) &i=6,\\
0 & 6<i.
\end{cases}
\end{equation}
It is superintegrable of ranks 3,4,5.
\end{exam}

\section{Conclusion} \label{sec8}
We have provided different deformations of superintegrable systems based on the notion of commutant or partial Casimirs. This concept has been applied in recent years in the context of Lie and Poisson-Lie algebras with the aim of providing new insight into algebraic superintegrable systems, their algebra of integrals, but also in the decomposition of enveloping algebras of Lie algebras, the missing label problem, and dynamical symmetries. This paper has greatly extended the construction by allowing the underlying Poisson algebra to involve polynomial and even rational bracket relations. 

We considered subalgebras of different types, Abelian, non-Abelian and one-dimensional subalgebras with parameters, for the construction of the commutant. The new integrals are  obtained as solutions to partial differential equations, and they can be polynomial, rational, or even transcendental functions. The Hamiltonians are by construction deformations of the initial Lotka-Volterra systems and they open the way for applications in a different context. This paper points out how, from a given superintegrable system, large families of deformed superintegrable systems can be obtained in arbitrary dimensions. In all examples provided, superintegrability is preserved under subalgebra of integrals deformation.

\section*{Acknowledgement}
The authors would like to thank Professor Pol Vanhaecke for enlightening discussions. Ian Marquette was supported by the Australian Research Council Future Fellowship FT180100099. 

\appendix

\section{Birational transformation} \label{secA}
We conjecture that the transformation from the $n$ variables $x_i$ to the $n-1$ integrals and distinguished variable $x_n$ is a birational map. We treat the odd and the even dimensional case separately.

\subsection{The odd dimensional case}
Taking $n=2m-1$, we define
\[
z=[F_1,F_2,\ldots,F_m,G_2,G_3,\ldots,G_{m-1},x_n].
\]
The aim is to find a formula for $x_i(z)$, $i=1,2,\ldots,n$.
We introduce intermediate coordinates,
\begin{equation} \label{ic}
y_i=\begin{cases}
\frac{x_{n-i}}{x_{n-i+1}} &i=1,\ldots,n-1, \\
x_i &i=n.
\end{cases}
\end{equation}
One can show that in these coordinates the functions $z$ are polynomial. In terms of the finite continued products
\begin{equation} \label{fcp}
Y_{a,b}=(((y_b+1)y_{b-1}+1)y_{b-2}+\cdots+1)y_a+1,
\end{equation}
we have
\begin{equation} \label{y(z)}
z_i=\begin{cases}
 Y_{n-2(i-1),n-1} y_n \prod_{j=1}^{m-i}y_{2j}& i\leq m,\\
Y_{1,2(i-m)} y_n \prod_{j=i-m+1}^{m-1}y_{2j} & m<i<n,\\
y_n & i=n.
\end{cases}
\end{equation}
We define two rational functions
\[
q_{-1}=\frac{z_m-z_n}{z_m(z_1-z_m)},\quad q_0=\frac{z_1(z_{m-1}+z_{m})}{z_m(z_1+z_{m-1})},
\]
and, recursively, $n-2$ polynomials, $q_1=z_{m-1}+z_{n}$ and, for $1<i<n-1$,
\[
q_i=\begin{cases}
-z_{m+k}q_{i-1}+z_1(z_{m}-z_n)\prod_{j=2}^k (z_1z_m-z_{m+1-j}z_{m-1+j})& i=2k, \\
z_{m-1-k}q_{i-1}+(-1)^kz_1z_n(z_{m-1}+z_m)\prod_{j=2}^k (z_1z_m+z_{m-j}z_{m-1+j})& i=2k+1.
\end{cases}
\]
In terms of these functions, we define
\[
y_i=\begin{cases}
-\frac{q_{i-2}}{q_i}\left((-1)^iz_1z_m
+z_{\lfloor (n-i+2)/2 \rfloor}
z_{\lfloor (n+i)/2 \rfloor}\right) &0<i<n-1,\\
(-1)^m\frac{q_{n-3}}{z_n(z_{m-1}+z_m)\prod_{j=2}^{m-2}
(z_1z_m+z_{m-j}z_{m+j-1})}&i=n-1,\\
z_n&i=n,
\end{cases}
\]
and claim they satisfy \eqref{y(z)}. And then we find
\begin{equation} \label{xj}
x_i=y_n\prod_{j=1}^{n-i}y_j, 
\end{equation}
which is what we were after. The statement has been verified using Maple \cite{Map} for $m\leq 7$.
\subsection{The even-dimensional case}
Taking $n=2m$, we define
\[
z=[F_1,F_2,\ldots,F_m,G_2,G_3,\ldots,G_{m-1},x_n].
\]
In terms of \eqref{ic} and \eqref{fcp} we have:
\begin{equation} \label{y(z)e}
z_i=\begin{cases}
Y_{n-2i+1,n-1} y_n\prod_{j=1}^{m-i}y_{2j} & i\leq m,\\
Y_{1,2(i-m)-1} y_n\prod_{j=i-m}^{m-1}y_{2j+1} & m<i<n,\\
y_n & i=n.
\end{cases}
\end{equation}
We claim that \eqref{y(z)e} is satisfied for
{\scriptsize
\begin{align*}
y_1&= \frac{z_{m}-z_{n}}{z_{m -1}+z_{n}}\\
y_2&=\begin{cases}
\frac{z_{1} \left(z_{2}+z_{1}\right)}{z_{1} z_{3}+z_{1} z_{4}+z_{2} z_{4}+z_{3} z_{4}} & m=2\\
-\frac{\left(z_{m -1}+z_{m}\right) \left(z_{m -2} z_{m +1}-z_{m -1} z_{m +2}\right)}{z_{n} z_{m +1} \left(z_{m}-z_{m -2}\right)+z_{n} z_{m +2} \left(z_{m}+z_{m -1}\right)+z_{m} z_{m +1} \left(z_{m -2}+z_{m -1}\right)} & m>2
\end{cases}\\
y_{2 i +1} &= 
\frac{z_{n} z_{m +i} \left(z_{m}+z_{m -1}\right)+z_{m +1} z_{m -i} \left(z_{m}-z_{n}\right)}{z_{n} z_{m +i +1} \left(z_{m}+z_{m -1}\right)+z_{m +1} z_{m -i -1} \left(z_{m}-z_{n}\right)},\quad 1\leq i \leq m-2 \\
y_{2 i} &= 
\frac{\left(z_{m +i} z_{m -i -1}-z_{m -i} z_{m +i +1}\right) \left(z_{n} \left(z_{m -1}+z_{m}\right) \left(z_{m +i}+z_{m +i -1}\right)+z_{m +1} \left(z_{m}-z_{n}\right) \left(z_{m -i}+z_{m -i +1}\right)\right)}{\left(z_{m -i} z_{m +i -1}-z_{m +i} z_{m -i +1}\right)\left(z_{n} \left(z_{m -1}+z_{m}\right) \left(z_{m +i}+z_{m +i +1}\right)+z_{m +1} \left(z_{m}-z_{n}\right) \left(z_{m -i}+z_{m -i -1}\right)\right)},\quad 2\leq i \leq m-2 \\
y_{n -2} &= 
-\frac{z_{1} z_{m} \left(z_{m +1} \left(z_{m}-z_{n}\right) \left(z_{1}+z_{2}\right)+z_{n} \left(z_{n -1}+z_{n -2}\right) \left(z_{m -1}+z_{m}\right)\right)}{\left(z_{1} z_{n -2}-z_{2} z_{n -1}\right)\left(z_{1} z_{m +1} \left(z_{m}-z_{n}\right)+z_{n} \left(z_{m}+z_{n -1}\right) \left(z_{m}+z_{m -1}\right)\right) },\quad m\neq 2\\
y_{n -1} &= 
\frac{z_{1} z_{m +1} \left(z_{m}-z_{n}\right)+z_{n} z_{n -1} \left(z_{m}+z_{m -1}\right)}{z_{m} z_{n} \left(z_{m -1}+z_{m}\right)}\\
y_n&=z_n,
\end{align*}}
and then \eqref{xj} holds. The statement has been verified using Maple \cite{Map} for $m\leq 8$.

\section{Second proof of Proposition \ref{PAs}} \label{secB}
\begin{lem} \label{AI}
The involution \eqref{istar} negates the bracket,
\[
\is\{A,B\}=-\{\is A,\is B\}.
\]
\end{lem}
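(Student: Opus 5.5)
The claim is that $\is$, the algebra automorphism of $\cA_n$ with $\is(x_i)=x_{n+1-i}$ from \eqref{istar}, is an anti-Poisson map for the bracket \eqref{PB}. Both sides of the identity are biderivations in $(A,B)$ once composed with $\is$. So the plan is to reduce to generators via \eqref{eq:pb} and then check the generators directly.

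\emph{Reduction to generators.} Since $\is$ is a field automorphism (an involution, hence bijective), the map $(A,B)\mapsto \is\{\is^{-1}A,\is^{-1}B\}$ is again a biderivation of $\cA_n$. Composing with $\is$ shows that $D_1(A,B):=\is\{A,B\}$ and $D_2(A,B):=-\{\is A,\is B\}$ are both $\is$-twisted biderivations: they are bilinear and satisfy $D(AB,C)=\is(A)D(B,C)+\is(B)D(A,C)$. By the chain rule, as in \eqref{eq:pb}, any such twisted biderivation satisfies
\[
D(A,B)=\sum_{i,j}D(x_i,x_j)\,\is\Big(\frac{\partial A}{\partial x_i}\Big)\is\Big(\frac{\partial B}{\partial x_j}\Big).
\]
This extends from polynomials to rational functions by the quotient rule. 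It therefore suffices to prove $\is\{x_i,x_j\}=-\{\is x_i,\is x_j\}$ for all $i<j$.

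\emph{Generator check.} Take $i<j$. Then $\is\{x_i,x_j\}=\is(x_ix_j)=x_{n+1-i}x_{n+1-j}$. On the other side, $\{\is x_i,\is x_j\}=\{x_{n+1-i},x_{n+1-j}\}$, and since $n+1-i>n+1-j$, antisymmetry together with \eqref{PB} gives $-x_{n+1-j}x_{n+1-i}$. Hence the two sides of the claimed identity agree.

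The only slightly delicate step is justifying the twisted-biderivation expansion for rational functions. This is routine: $\is$ commutes with forming quotients, and both $D_1$ and $D_2$ obey the same Leibniz rule in each argument, so their values on all of $\cA_n$ are determined by their values on pairs of generators.
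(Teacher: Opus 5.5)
Your proof is correct and follows the same route as the paper, whose proof consists only of the observation that $\is\{x_i,x_j\}=-\{\is x_i,\is x_j\}$ holds on the generators. Your twisted-biderivation argument makes explicit the reduction to generators that the paper leaves implicit; equivalently, it compares the two biderivations $\{A,B\}$ and $-\is\{\is A,\is B\}$.
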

\begin{proof}
It suffices to observe that for all $i,j$ we have $\is \{x_i,x_j\}=-\{\is x_i,\is x_j\}$.
\end{proof}
The involution \eqref{istar} induces an involutions on the space $\F_n$, given by \eqref{Fe} or \eqref{Fo}, as
\[
\is(F_i)=G_i,\quad \is(G_i)=F_i,\quad \is(H)=H,\quad \is(C)=C,
\]
which respects Lemma \ref{AI}. To prove Proposition \ref{PAs}(i), we show that the Jacobi-identity is satisfied, i.e. that
\begin{equation} \label{JI}
\{A,\{B,C\}\} + \{B,\{C,A\}\} + \{C,\{A,B\}\} =0,
\end{equation}
for all functions $A,B,C$. It suffices to show that \eqref{JI} holds for all choices of $A,B,C\in \{F_1,\ldots,F_{r-1},G_1,\ldots,G_{r-1},H\}$. If either one (or more) of $A,B,C$ equals $H$, or $A=F_i,B=F_j,C=F_k$, or $A=G_i,B=G_j,C=G_k$ for some $i,j,k$, then \eqref{JI} is trivially satisfied as each of the terms vanishes. Rest us to consider the following cases:
\begin{equation}\label{2FG}
\{F_i,\{F_j,G_k\}\} + \{G_k,\{F_i,F_j\}\} + \{F_j,\{G_k,F_i\}\} =0,
\end{equation}
and
\begin{equation}\label{2GF}
\{G_i,\{G_j,F_k\}\} + \{F_k,\{G_i,G_j\}\} + \{G_j,\{F_k,G_i\}\} =0,
\end{equation}
where the middle terms vanish. We claim that
\begin{equation} \label{iop}
\{F_i,\{F_j,G_k\}\}=\begin{cases}
    -F_i\{F_j,G_k\} & i\leq j \\
    -F_j\{F_i,G_k\} & j\leq i,
\end{cases}
\end{equation}
from which \eqref{2FG} follows directly. Equation \eqref{iop} can be proven as follows.
\begin{itemize}
\item Let $i\leq j$. If $\kappa(j,k)<0$ then $\kappa(i,k)<0$ and we have
\[
\{F_i,\{F_j,G_k\}\}=\{F_i,-F_jG_k\}=-F_j\{F_i,G_k\}=-F_jF_iG_k=-F_i\{F_j,G_k\}.
\]
If $\kappa(j,k)\geq0$ then, since we know $\kappa(i,r-j)<0$,
\[
\{F_i,\{F_j,G_k\}\}=\{F_i,-F_{r-k}G_{r-j}\}=-F_{r-k}\{F_i,G_{r-j}\}=-F_{r-k}F_iG_{r-j}=-F_i\{F_j,G_k\}.
\]
\item Next, let $j\leq i$. If $\kappa(j,k)<0$ then
\[
\{F_i,\{F_j,G_k\}\}=\{F_i,-F_jG_k\}=-F_j\{F_i,G_k\}.
\]
If $\kappa(j,k)\geq0$ (and hence $\kappa(i,k)\geq0$) then, for $j=i$ (so that $\kappa(i,r-j)<0$)
\[
\{F_i,\{F_j,G_k\}\}=\{F_i,-F_{r-k}G_{r-j}\}=
-F_{r-k}\{F_i,G_{r-j}\}=F_{r-k}F_iG_{r-j}
=-F_i\{F_j,G_k\}=-F_j\{F_i,G_k\},
\]
and for $j<i$ (so that $\kappa(i,r-j)\geq0$
\[
\{F_i,\{F_j,G_k\}\}=\{F_i,-F_{r-k}G_{r-j}\}=
-F_{r-k}\{F_i,G_{r-j}\}=F_{r-k}F_jG_{r-i}
=-F_j\{F_i,G_k\}.
\]
\end{itemize}
Finally, the identity \eqref{2GF} follows by application of $i^*$ to \eqref{2FG} and using Lemma \ref{AI}.

\smallskip
Next we prove Proposition \ref{PAs}(ii). The strategy is similar: it suffices to show \eqref{2FG}, i.e.
\begin{equation}\label{2FG2}
\{F_i,\{F_j,G_k\}\} + \{F_j,\{G_k,F_i\}\} =0,
\end{equation}
from which \eqref{2GF} follows by Lemma \ref{AI}. We claim that 
\begin{equation} \label{FFG}
\{F_i,\{F_j,G_k\}\}=\begin{cases}
0 &\text{if } \kappa(i,k) < 0 \text{ or } \kappa(j,k) < 0 \\
-F_i\{F_j,G_k\} &\text{if } \kappa(i,k) \geq 0  \text{ and } i\leq j \\
-F_j\{F_i,G_k\} &\text{if } \kappa(j,k) \geq 0  \text{ and } i\geq j,
\end{cases}
\end{equation}
which is sufficient for (\ref{2FG2}). To prove \eqref{FFG} we distinguish four cases.
\begin{itemize}
\item
If $\kappa(j,k) < 0$ then $\{F_j,G_k\}=0$.
\item
Let $\kappa(i,k) < 0$. The function $Q(F,G)=\{F_j,G_k\}$ only depends on $G_l$ with $l\leq k$. As $\kappa(i,l) < 0$ then $\{F_i,Q\}=0$.
\item Let $\kappa(i,k) \geq 0$ and $i\leq j$. We first consider $\kappa(i,k) = 0$, that is, $k=r+1-i$. We have
\[
\{F_j,G_k\}=
(-1)^{j-i} (CH -F_{i} G_{r+1-i} )Q,\qquad Q=\prod_{l=1}^{j-i} \dfrac{CH -F_{i + l} G_{r+1-i-l}}{CH +F_{i + l-1} G_{r+1-i-l}}.
\]
As the product $Q$ depends on $G_{r+1-i-l}$ with $l>0$ it commutes with $F_i$. Hence
\begin{align*}
\{F_i,\{F_j,G_k\}\}&=-F_{i}\{F_i,G_{r+1-i}\}(-1)^{j-i}\prod_{l=1}^{j-i} \dfrac{CH -F_{i + l} G_{r+1-i-l}}{CH +F_{i + l-1} G_{r+1-i-l}}\\
&=-F_{i}(CH-F_iG_{r+1-i})(-1)^{j-i}\prod_{l=1}^{j-i} \dfrac{CH -F_{i + l} G_{r+1-i-l}}{CH +F_{i + l-1} G_{r+1-i-l}}\\
&=-F_{i}\{F_j,G_k\}.
\end{align*}
Next, consider $\kappa(i,k) = p$, that is, $k=r+p+1-i$. We have
\[
\{F_j,G_k\}=
(-1)^{j+p-i} (CH -F_{i-p} G_{r+p+1-i} ) \prod_{l=1}^{p} \dfrac{CH -F_{i-p+l} G_{r+p+1-i-l} }{CH +F_{i-p+l-1} G_{r+p+1-i-l}} Q,
\]
with $Q=\prod_{l=p+1}^{j+p-i} \dfrac{CH -F_{i-p+l} G_{r+p+1-i-l} }{CH +F_{i-p+l-1} G_{r+p+1-i-l}}$.
The product $Q$ depends on $G_{r+p+1-i-l}$ with $l>p$, which implies that $Q$ commutes with $F_i$. Hence, to show that $\{F_i,\{F_j,G_k\}\}=-F_{i}\{F_j,G_k\}$, we need to prove that $\{F_i,K_p\}=-F_iK_p$ for
\[
K_p=(CH -F_{i-p} G_{r+p+1-i} ) \prod_{l=1}^{p} \dfrac{CH -F_{i-p+l} G_{r+p+1-i-l} }{CH +F_{i-p+l-1} G_{r+p+1-i-l}},
\]
for $p=1,2,\ldots$. The derivative of $\dfrac{CH -F_{i-p + l} G}{CH +F_{i-p + l-1} G}$ with respect to $G$ is $\dfrac{-CH(F_{i-p+l-1}+F_{i-p + l})}{(CH +F_{i-p + l-1} G)^2}$. The calculation for $p=1$ is
\begin{align*}
\{F_i,K_1\}&=\{F_i,(CH -F_{i-1} G_{r+2-i} ) \dfrac{CH -F_{i} G_{r+1-i} }{CH +F_{i-1} G_{r+1-i}}\}\\
&=-F_{i-1}\dfrac{CH -F_{i} G_{r+1-i} }{CH +F_{i-1} G_{r+1-i}}\{F_i,G_{r+2-i}\}
-(CH -F_{i-1} G_{r+2-i} )\dfrac{CH(F_{i-1}+F_{i})}{(CH +F_{i-1} G_{r+1-i})^2}\{F_i,G_{r+1-i}\}\\
&=F_{i-1}\dfrac{(CH -F_{i} G_{r+1-i})^2(CH -F_{i-1} G_{r+2-i} )}{(CH +F_{i-1} G_{r+1-i})^2}
-\dfrac{(CH -F_{i-1} G_{r+2-i} )CH(F_{i-1}+F_{i})(CH -F_{i} G_{r+1-i})}{(CH +F_{i-1} G_{r+1-i})^2}\\
&=\dfrac{F_{i-1}(CH -F_{i} G_{r+1-i})-CH(F_{i-1}+F_{i})}{(CH +F_{i-1} G_{r+1-i})}K_1\\
&=-F_iK_1.
\end{align*}
We are now ready to deal with general $p$. Define
$
K^0_p=CH -F_{i-p} G_{r+p+1-i}$, and
\[
K^l_p= \dfrac{CH -F_{i-p+l} G_{r+p+1-i-l} }{CH +F_{i-p+l-1} G_{r+p+1-i-l}}
\]
for $l=1,2,\ldots,p$, so that $K_p=\prod_{l=0}^{p} K^l_p$. Each factor $K^l_p$ depends on only one $G$-variable, namely $G_{r+p+1-i-l}$, we have $\{F_i,K^l_p\}=C^lK^l_p$ with
\[
C^0=(-1)^{p+1}F_{i-p}\prod_{l=1}^p K^l_p,
\]
and
\[
C^l=(-1)^{p+l+1}\dfrac{CH(F_{i-p+l-1}+F_{i-p + l})}{CH +F_{i-p + l-1} G_{r+p+1-i-l}}\prod_{m=l+1}^p K^m_p,
\]
for $l=1,2,\ldots,p$, and hence $\{F_i,K_p\}=\sum_{l=0}^p C^lK_p$. We claim that, for $q\leq p$,
\begin{equation}\label{lq}
\sum_{l=0}^q C^l=(-1)^{p+q+1}F_{i-p+q}\prod_{m=q+1}^p K^m_p, 
\end{equation}
which we prove by induction, as follows. By substituting $q=0$ in the right hand side of \eqref{lq} we find $C^0$ and, for $q<p$, we have, using \eqref{lq}
\begin{align*}
\sum_{l=0}^{q+1} C^l&= \sum_{l=0}^{q} C^l+ C^{q+1}\\
&=(-1)^{p+q+1}F_{i-p+q}\left(\prod_{m=q+1}^p K^m_p\right) + 
(-1)^{p+q}\dfrac{CH(F_{i-p+q}+F_{i-p + q+1})}{CH +F_{i-p + q} G_{r+p-i-q}}\prod_{m=q+2}^p K^m_p \\
&=(-1)^{p+q}\left(\dfrac{-F_{i-p+q}(CH -F_{i-p + q+1} G_{r+p-i-q})+CH(F_{i-p+q}+F_{i-p + q+1})}{CH +F_{i-p + q} G_{r+p-i-q}}\right)\prod_{m=q+2}^p K^m_p\\
&=(-1)^{p+q+2}F_{i-p + q+1}\prod_{m=q+2}^p K^m_p,
  \end{align*}
which equals the right hand side of \eqref{lq} with $q$ replaced by $q+1$. But now, taking $q=p$ we find $\sum_{l=0}^p C^l=-F_i$, which is what was to be shown.
\item Lastly, let $\kappa(j,k) \geq 0$ and $i\geq j$. We define $\kappa(j,k) = p$, so $k=r+p+1-j$. We have
\[
\{F_j,G_k\}=
(-1)^{p} (CH -F_{j-p} G_{r+p+1-j} ) \prod_{l=1}^{p} \dfrac{CH -F_{j-p+l} G_{r+p+1-j-l} }{CH +F_{j-p+l-1} G_{r+p+1-j-l}}.
\]
Let $i=j+q$. Then
\begin{align*}
\{F_i,G_k\}&=
(-1)^{p+q} (CH -F_{j-p} G_{r+p+1-j} ) \prod_{l=1}^{p+q} \dfrac{CH -F_{j-p+l} G_{r+p+1-j-l} }{CH +F_{j-p+l-1} G_{r+p+1-j-l}}\\
&=\{F_j,G_k\}Z,\qquad Z=(-1)^{q}\prod_{l=1}^{q} \dfrac{CH -F_{j+l} G_{r+1-j-l} }{CH +F_{j+l-1} G_{r+1-j-l}}.
\end{align*}
If we replace $i$ by $j$ in the definitions of $K^l_p$, $k=0,1,\ldots,p$, then we can write $\{F_j,G_k\}=\prod_{l=0}^{p} K^l_p$. Each factor $K^l_p$ depends on only one $G$-variable, namely $G_{r+p+1-j-l}$, we again have $\{F_i,K^l_p\}=C^lK^l_p$, this time with
\[
C^0=(-1)^{p+1}F_{j-p}\left(\prod_{l=1}^p K^l_p \right) Z,
\]
and
\[
C^l=(-1)^{p+l+1}\dfrac{CH(F_{j-p+l-1}+F_{j-p + l})}{CH +F_{j-p + l-1} G_{r+p+1-j-l}}\left(\prod_{m=l+1}^p K^m_p \right) Z,
\]
for $l=1,2,\ldots,p$, and hence $\{F_i,\{F_j,G_k\}\}=\sum_{l=0}^p C^l\{F_j,G_k\}=-F_j Z \{F_j,G_k\}=-F_j\{F_i,G_k\}$. 
\end{itemize}

\end{document}